\theoremstyle{remark}
\definecolor{darkblue}{HTML}{2C326E}
\definecolor{sectboxrulecol}{rgb}{0,0,0.5}
\definecolor{sectboxfillcol}{rgb}{0.95,0.95,1}
\definecolor{subsectboxrulecol}{rgb}{0,0.5,0}
\definecolor{subsectboxfillcol}{rgb}{0.95,1,0.95}
\definecolor{subsubsectboxrulecol}{rgb}{0.5,0.5,0}
\definecolor{subsubsectboxfillcol}{rgb}{1,1,0.9}
\newtheorem{proposition}{Proposition}
\theoremstyle{remark}
\newtheorem{remark}{Remark}
\lstdefinestyle{mystyle}{
backgroundcolor=\color{backcolour},
commentstyle=\color{codegreen},
keywordstyle=\color{magenta},
basicstyle=\small,
breaklines=true,
keepspaces=true,
numbers=left,
numbersep=5pt,
}
\definecolor{codegreen}{rgb}{0,0.6,0}
\definecolor{backcolour}{rgb}{0.95,0.95,0.92}
\title{Efficient spline orthogonal basis for representation of~density functions}
\author[1]{Jana Burkotov\'{a} \thanks{corresponding author: jana.burkotova@upol.cz}}
\author[1]{Ivana Pavl\r{u}}
\author[2]{Hiba Nassar}
\author[1]{Jitka Machalov\'{a}}
\author[2]{Karel~Hron}
\affil[1]{Department of Mathematical Analysis and Applications of Mathematics, Faculty of Science, Palack\'y University Olomouc, 17. listopadu 12, Olomouc, 77146, Czech Republic}
\affil[2]{Department of Applied Mathematics and Computer Science Cognitive Systems, Technical University of Denmark, Richard Petersens Plads, 321, 210~2800 Kgs. Lyngby, Denmark}
\date{}
\begin{document}
\maketitle

\begin{abstract}
Probability density functions form a specific class of functional data objects with intrinsic properties of scale invariance and relative scale characterized by the unit integral constraint. The Bayes spaces methodology respects their specific nature, and the centred log-ratio transformation enables processing such functional data in the standard Lebesgue space of square-integrable functions. As the data representing densities are frequently observed in their discrete form, the focus has been on their spline representation.  
Therefore, the crucial step in the approximation is to construct a proper spline basis reflecting their specific properties. Since the centred log-ratio transformation forms a~subspace of functions with a zero integral constraint, the standard $B$-spline basis is no longer suitable. Recently, a new spline basis incorporating this zero integral property, called $Z\!B$-splines, was developed. However, this basis does not possess the orthogonal property which is beneficial from computational and application point of view.
As a result of this paper, we describe an efficient method for constructing an orthogonal $Z\!B$-splines basis, called $Z\!B$-splinets. The advantages of the $Z\!B$-splinet approach are foremost a computational efficiency and locality of basis supports that is desirable for data interpretability, e.g. in the context of functional principal component analysis. The proposed approach is demonstrated on an empirical demographic dataset.

\end{abstract}

\section{Introduction}

Functional data analysis (FDA) is a statistical framework for processing and analysing data that can be represented as continuous functions. However, they are usually not recorded directly in their functional form but as discrete observations. Therefore, the first step in FDA is to fit the discrete data with a proper representation, preferably using a basis expansion \citep{kokoszka17}, and then to proceed with statistical methods of FDA. 
In order to provide a reliable approximation, the basis functions should reflect the specific properties of the underlying functional data objects. Another important point is also computational efficiency: an orthogonal basis having small local support with a moderate number of computations needed for its construction is desired and highly preferable, in particular with large-scale data sets that arise frequently in applications.

In the context of FDA, probability density functions (PDFs) form a specific class of functional data objects. They appear in many applications such as geosciences \citep{simicek21}, demographics \citep{SFPCA,talska21},  economics \citep{talska20} and more. The natural properties of PDFs call for a proper framework for their approximation. The Lebesgue space $L^2$, a standard space for FDA methods, does not respect the primarily relative information carried by PDFs.
Instead, the so-called Bayes space methodology was introduced in \citep{egozcue06,boogaart10,boogaart14} and used as a common framework for statistical analysis for PDFs. Although data processing and methodological developments are possible directly in the Bayes space $\mathcal{B}^2$ of square-integrable log-functions, one often opts for the equivalent representation of PDFs in the $L^2$ space.
The isometric isomorphism between $\mathcal{B}^2$ and a~subspace of $L^2$ of functions with zero integral $L^2_{0}$ represented by the centered log-ratio (clr) transformation \citep{boogaart14} enables such one-to-one mapping that equips the resulting functions with the zero integral property while maintaining the relative information of the original densities.

To represent the discrete observations of PDFs as functional data objects, it is common to express them through basis expansion. Therefore, a selection of a proper basis is a crucial step in FDA. Some of the most popular approaches involve $B$-spline basis \cite{dierckx, deboor78}, Fourier basis for periodic data \cite{bracewell78},  or wavelets \cite{ramsay05}. $B$-spline basis is a usual choice for data approximation since it utilizes advantageous properties of polynomial functions and benefits from the small local support of individual basis functions. In \cite{EDOB, BasnaNP, basna2023splinets}, a new approach to spline basis selection was presented that used a data-driven knot selection to increase the efficiency of the basis in representing the data.

In general, we can claim that a successful approximation strongly relies on the existence of a suitable class of basis functions reflecting the nature of observed data. Specifically, for clr-transformed PDFs, an appropriate basis is required to maintain the zero integral property. Such a class of spline functions called $Z\!B$-splines was first introduced in  \cite{compositional}.  However, in the context of many FDA methods, including the foremost popular functional principal component analysis (FPCA) \citep{kokoszka17} for dimension reduction, orthogonality of such a spline basis is crucial from the methodological point of view while it is needed to keep it interpretable and computationally efficient. 
Functional principal components can be calculated directly with the spline coefficients when orthogonal spline basis is used, see \cite{ramsay05,SFPCA,talska18}.  Moreover, it is desirable that the resulting orthogonal basis preserves most of the beneficial attributes of the original spline basis and does not violate the characteristic local support property. While this is hardly possible with standard orthogonalization algorithms, the proposed \textit{ZB-splinets} procedure is able to deal with this challenge.

As mentioned, the approximation of discrete data is a crucial step of data preprocessing. Since the underlying PDF is assumed to be smooth, the smoothing spline is used for approximation \cite{dierckx}.
Given that the clr transformation establishes a bijective mapping between $\mathcal{B}^2$ and $L^2_0$ spaces, it seems natural to develop smoothing splines in $L_0^2$ while adhering to the zero-integral constraint. Smoothing splines having zero integral were first introduced in \cite{preprocessing}, where necessary and sufficient conditions for the respective $B$-spline coefficients were provided. In \cite{compositional}, the idea of constructing a smoothing spline as a linear combination of basis functions having zero integral is presented. For this purpose, $Z\!B$-splines, which form a basis of the corresponding space and satisfy the zero integral constraint, were defined. 

The problem of orthogonalization of spline bases has been studied in the mathematical literature from both theoretical and application perspectives and can be performed in different ways. In \cite{mason,Redd,alavi}, the orthogonalization of $B$-spline basis is completed through the well-known Gram-Schmidt process. For uniform periodic splines an orthonormal basis has been considered in \cite{kamada} and a quasi-orthogonal basis with nearly diagonal Gram matrix has been proposed in \cite{periodic}. However, the orthogonalization through the Gram-Schmidt process has the disadvantage of losing the local property of $B$-splines since the total support is increasing at every step of the process. Therefore, a modification was proposed in \cite{Redd} that leads to a smaller total support of the resulting orthogonalized spline basis as discussed in  \cite{splinets}. Furthermore, a new symmetric orthogonalization procedure for $B$-splines was introduced in \cite{splinets}, and an associated R package has been developed \cite{SplinetsP} for this purpose. The process produces a net-like structure of splines called splinets that preserve the locality of $B$-spline bases, in fact, the total support is only slightly increased compared to the original $B$-splines.  

Although the problem of orthogonalization of $B$-splines is well-studied, only initial steps in this direction were done in context of novel $Z\!B$-splines for approximation of PDFs. 
This paper aims to fill this gap. For this purpose, we adapt the splinet approach to $Z\!B$-splines and develop a new efficient orthogonal basis $Z\!B$-splinets.
These splinets form a net of orthonormalized $Z\!B$-splines that preserve both the local support property and the zero integral constraint. 
In this paper, we demonstrate that this method is a more effective orthogonalization approach in contrast to the commonly used Gram-Schmidt method.

The main contribution of the paper is to provide a construction of an efficient orthogonal basis for representing probability density functions. 
Specifically, we 
\begin{itemize}
\setlength{\itemsep}{0pt}%
\setlength{\parskip}{0pt}%
    \item present how the splinet approach is utilized for orthogonalization of $Z\!B$-spline basis;
    \item prove the efficiency of the $Z\!B$-splinets  in terms of size of its total support and number of performed inner products;
    \item demonstrate the advantageous properties of $Z\!B$-splinets on an empirical dataset. %
\end{itemize}

The paper is organized as follows. In Section \ref{Zsplines}, theoretical aspects of approximation of PDFs are summarized. The idea of $Z\!B$-spline approximation in context of the Bayes spaces framework is introduced, as well as the role of smoothing in spline representation of PDFs. Section \ref{orthogonalization} provides the main result of this paper, construction of a new orthogonal basis for PDF representation and its computational efficiency. 
The comparison of the proposed orthogonalization approach with standard methods is done in Section \ref{application} by performing functional principal component analysis with empirical demographic dataset.
Finally, Section \ref{conclusions} concludes and provides the final outlook.

\section{ZB-spline basis}
\label{Zsplines}
In this section, we present the construction of $Z\!B$-spline basis that respects the zero-integral property of clr-transformed PDFs and we utilize this basis for approximation of discretized PDFs with a smoothing spline. 
To provide a comprehensive understanding, we first offer a~brief introduction to the Bayes space framework.

\subsection{Bayes spaces}
Probability density functions can be considered as functional compositions characterized by their scale invariance and relative scale properties. Accordingly, the standard space of Lebesgue measurable functions $L^2$ is not appropriate for densities. Therefore, as mentioned in the Introduction section,  
the Bayes spaces methodology was proposed 
as a unifying framework for analyzing PDFs while maintaining their relative scale properties. 
Accordingly, the operations such as summation, scalar multiplication and inner product 
are formulated in the Bayes space setting so that they reflect the scale invariance of PDFs.%

The Bayes space $\mathcal{B}^2(I)$ is defined as a space of nonnegative functional compositions with a~square-integrable logarithm on a bounded interval $I=[a,b]$ accompanied by the operations of perturbation and powering, which are defined as
$$
(f\oplus g)(x) \, = \, \frac{f(x)g(x)}{\int_If(y)g(y)\,\mathrm{d}y},\quad 
(\alpha\odot f)(x) \, = \, \frac{f(x)^{\alpha}}{\int_If(y)^{\alpha}\,\mathrm{d}y}, \quad \alpha \in \mathbb{R},   \quad x\in I.
$$
Note that rescaling of the resulting PDFs in perturbation and powering is done merely for the purpose of the usual interpretability with unit integral constraint, nevertheless, it is not needed for the operations themselves. 
The Bayes space $\mathcal{B}^2(I)$ with the inner product
$$
\langle f,g\rangle_\mathcal{B}=\frac{1}{2\eta}\int_I\int_I\ln\frac{f(x)}{f(y)}\ln\frac{g(x)}{g(y)}\,\mathrm{d}x\,\mathrm{d}y, \quad \eta=b-a,
$$
is a separable Hilbert space \citep{boogaart14}. Consequently, it possesses a suitable geometric  
structure for statistical analysis of PDFs. 
Furthermore, the construction of the $\mathcal{B}^2$ space allows for an unambiguous representation of the original PDFs in the $L^2$ space. 
This is achieved through the so-called centred log-ratio (clr) transformation, defined for a density $f \in \mathcal{B}^{2}(I)$ as
$$
\mbox{clr}(f)(x)=f_c(x)=\ln f(x)-\frac{1}{\eta}\int_I\ln f(y)\,\mathrm{d}y, \quad x \in I.
$$
The resulting real functions capture relative properties of the original PDFs while they enable their standard statistical processing using tools of FDA in $L^2(I)$. However, the clr transformation implies the additional zero integral condition on 
the resulting clr-transformed density function. Indeed, 
$$
\int_I f_c(x) \mathrm{d}x = \int_I \ln f(x) \mathrm{d}x - \int_I 
 \frac{1}{\eta}\int_I\ln f(y)\,\mathrm{d}y\, \mathrm{d}x= 0.
$$
Therefore, the clr-transformed PDFs are elements of $L_0^2(I)$. In order to provide a reasonable spline representation of clr-transformed PDFs, a proper spline basis having the same essential properties as the underlying PDFs is required. We describe the construction of such a basis in upcoming sections.

\subsection{B-splines}
In this section, we first provide a brief general introduction to $B$-spline representation and their corresponding notation. 
We denote as ${\cal S}_{k}^{\Delta\lambda}[a,b]$
the vector space of splines of degree $k\in\mathbb{N}_0$, defined on a finite interval $I=[a,b]$ with the sequence of knots $\Delta\lambda = \{\lambda_i\}_{i=0}^{g+1}$
given by
$$
\lambda_{0}=a<\lambda_{1}<\ldots<\lambda_{g}<b=\lambda_{g+1}.
$$
The space ${\cal S}_{k}^{\Delta\lambda}[a,b]$ has a finite dimension \cite{dierckx}
$$\dim({\cal S}_{k}^{\Delta\lambda}[a,b]) \;= \; g+k+1,$$ 
where $g$ is the number of inner knots in $\Delta\lambda$ and $k$ is the degree of splines in ${\cal S}_{k}^{\Delta\lambda}[a,b]$.
A~convenient choice of basis functions of ${\cal S}_{k}^{\Delta\lambda}[a,b]$ are $B$-splines \cite{deboor78, dierckx}. $B$-splines of degree $k\in\mathbb{N}$ (order $k+1$)  can be constructed using a recurrent formula
$$
B_{i}^{k+1}(x) \, = \,\frac{x-\lambda_i}{\lambda_{i+k}-\lambda_{i}}B_i^k(x)+\frac{\lambda_{i+k+1}-x}{\lambda_{i+k+1}-\lambda_{i+1}}B_{i+1}^k(x),
$$ 
where the $B$-spline of zero degree is defined as a piecewise constant function
$$
B_{i}^1(x) \, = \, \left\{
\begin{array}{rl}
    1 &\quad\mbox{if}\; x\in[\lambda_{i},\lambda_{i+1})\\
    0 &\quad\mbox{otherwise.}
\end{array}\right.
$$
$B$-splines are advantageous for their nice structural properties. They are nonnegative functions with small local support, and for $k>0$ they are differentiable up to the order $k-1$ on $[a,b]$. 
In order to form a $B$-spline basis of ${\cal S}_{k}^{\Delta\lambda}[a,b]$, additional knots to $\Delta \lambda$ need to be considered.
A~commonly used choice is to consider coincident additional knots
\begin{equation}\label{adknot}
\lambda_{-k}=\cdots=\lambda_{-1}=\lambda_{0}=a, \qquad b=\lambda_{g+1}=\lambda_{g+2}=\cdots=\lambda_{g+k+1}.
\end{equation}
Therefore the corresponding extended sequence of knots  $\Delta \Lambda =\{ \lambda_i\}_{i=-k}^{g+k+1}$ is given by
\begin{equation}\label{exknots}
\lambda_{-k}=\cdots=\lambda_{-1}=\lambda_{0}=a <\lambda_{1}<\ldots<\lambda_{g}<
   b=\lambda_{g+1}=\lambda_{g+2}=\cdots=\lambda_{g+k+1}.
\end{equation}
Then the corresponding $B$-spline basis consists of $B$-splines 
$
B_{i}^{k+1}(x)
$, $i=-k, \ldots, g$ defined on the extended sequence of knots $\Delta \Lambda$ \eqref{exknots} and every spline $s_{k}(x)\in{\cal S}_{k}^{\Delta\lambda}[a,b]$ has a unique representation
\begin{equation*}
    s_{k}\left(x\right)=\sum\limits_{i=-k}^{g}b_{i}B_{i}^{k+1}\left(x\right),
\end{equation*}
which can be written in matrix notation as
$$
s_{k}(x) = \mathbf{B}_{k+1}(x)\,\mathbf{b},
$$
where $\mathbf{B}_{k+1}(x)=\left(B_{-k}^{k+1}(x),\dots,B_{g}^{k+1}(x) \right)$ is 
a vector of $B$-spline functions and $\mathbf{b}$ is the vector of $B$-spline coefficients $\mathbf{b} = \left(b_{-k},\dots,b_{g} \right)^{\top}$.

\subsection{ZB-splines}
For the approximation of clr transformed density functions with the zero integral constraint, the use of the standard $B$-spline basis is not that beneficial because it ignores the condition in $L_0^2(I)$ which causes additional constraints on the spline coefficients as described in \cite{preprocessing}.
To overcome this inconvenience, a different basis for splines having the zero integral property was defined in \cite{compositional}. The proposed basis functions are called $Z\!B$-splines. In what follows, we describe their construction and stress their important properties. 

The $Z\!B$-spline of degree $k\in \mathbb{N}_0$ is defined as a first derivative of $B$-spline
\begin{equation}\label{defZB}
Z_{i}^{k+1}(x):=\frac{\mbox{d}}{\mbox{d}x} B_{i}^{k+2}(x).
\end{equation} 
From this definition of $Z\!B$-splines and the formula for derivation of $B$-splines \cite{dierckx}, the relation%
\begin{equation*}
\label{defZ}
Z_i^{k+1}(x)=(k+1)\left(\frac{B_i^{k+1}(x)}{\lambda_{i+k+1}-\lambda_i}-\frac{B_{i+1}^{k+1}(x)}{\lambda_{i+k+2}-\lambda_{i+1}}\right), \quad k\geq 0
\end{equation*}
follows. Especially for $k=0$, $Z\!B$-spline is a piece-wise constant function
$$
Z_{i}^1(x)=\left\{
\begin{array}{rl}
     \dfrac{1}{\lambda_{i+1}-\lambda_i} & \quad\mbox{if}\; x\in[\lambda_{i},\lambda_{i+1})\\
     & \\
     \dfrac{-1}{\lambda_{i+2}-\lambda_{i+1}} & \quad\mbox{if}\; x\in[\lambda_{i+1},\lambda_{i+2}).
\end{array}\right.
$$
A set of linear $Z\!B$-splines is depicted in Figure \ref{Zsplines1} (right) together with the corresponding antecedent set of quadratic $B$-splines (left). A similar set of quadratic $Z\!B$-splines and cubic $B$-splines is shown in Figure \ref{Zsplines2}.

\begin{figure}[ht]
\centering
\includegraphics[width=0.49\textwidth]{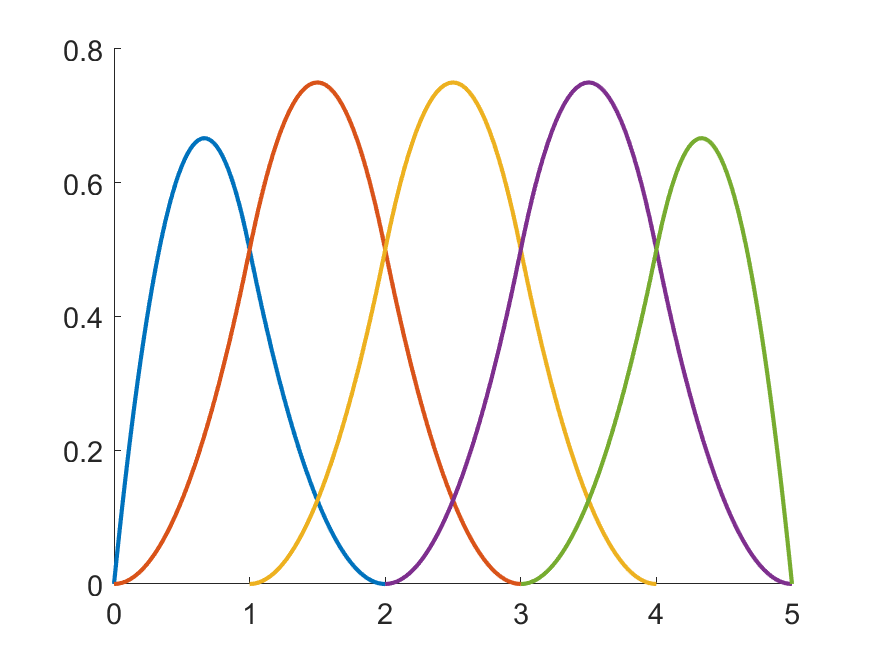}
\includegraphics[width=0.49\textwidth]{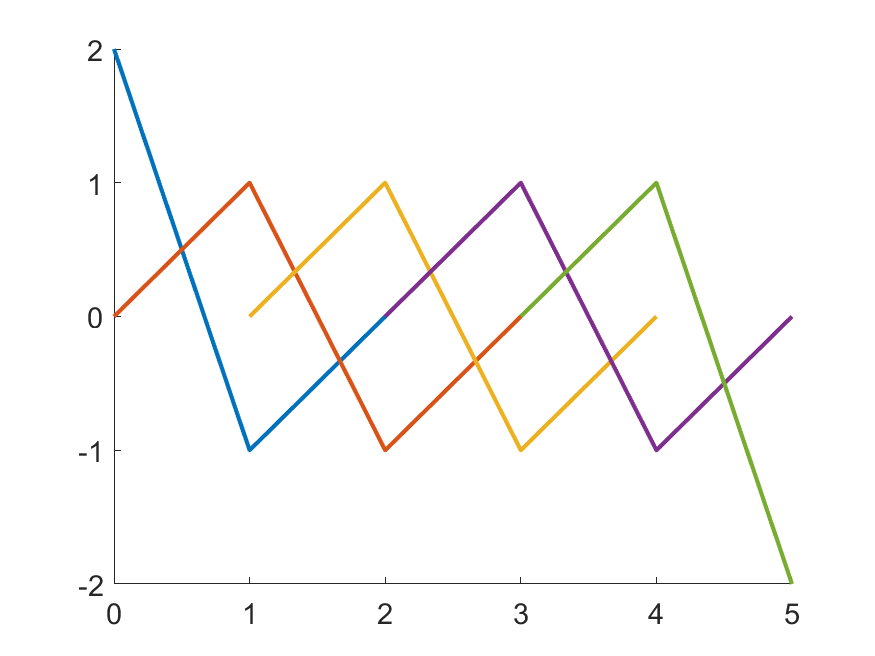}
  \caption{$B$-splines of degree 2 ({\it left}), the corresponding $Z\!B$-splines of degree 1 ({\it right}).}
  \label{Zsplines1}
  \end{figure}

\begin{figure}[ht]
\centering
\includegraphics[width=0.49\textwidth]{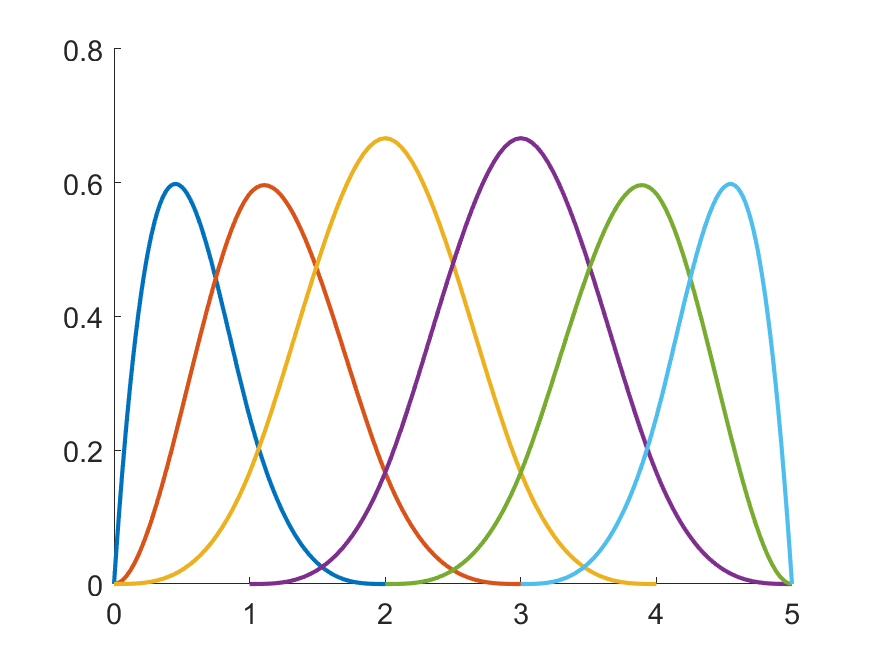}
\includegraphics[width=0.49\textwidth]{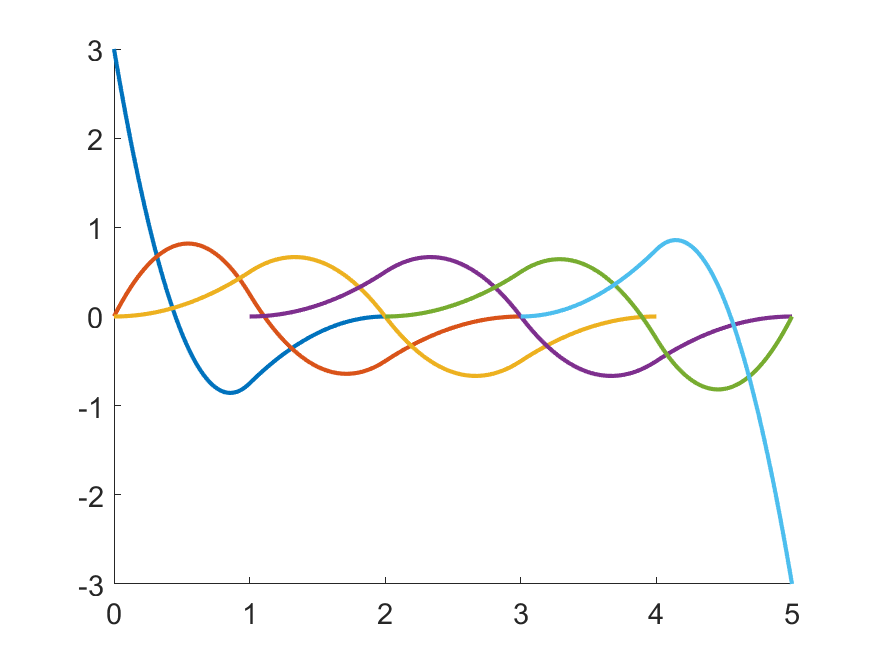}
  \caption{$B$-splines of degree 3 ({\it left}), the corresponding $Z\!B$-splines ({\it right}).}
  \label{Zsplines2}
  \end{figure}
From the construction of $Z\!B$-splines, similar beneficial properties as for $B$-splines are retained for functions $Z_{i}^{k+1}(x)$. $Z\!B$-splines are continuous piecewise polynomial functions of degree $k$ with the local support
$$\mbox{supp}\;Z_i^{k+1}(x) \; = \; \mbox{supp}\;B_i^{k+2}(x)=[\lambda_i,\lambda_{i+k+2}).$$
Moreover, $Z\!B$-splines $Z_i^{k+1}(x)$, $i=-k, \ldots, g\!-\!1$ obey the crucial zero integral property%
$$\int\limits_I
Z_i^{k+1}(x) \, \mbox{d}x \, =\int_a^b  \frac{\mbox{d}}{\mbox{d}x} B_{i}^{k+2}(x) {\mbox{d}x}=B_{i}^{k+2}(b)-B_{i}^{k+2}(a)=0-0=\, 0.$$
This follows from the property of $B$-splines defined on an extended sequence of knots with coincident additional knots \cite{dierckx}
\begin{align*}
B_i^{k+2}(\lambda_i)=0, \quad  -k \leq i \leq 0,\qquad
B_i^{k+2}(\lambda_{i+k+2})=0, \quad  g-k-1 \leq i\leq g\!-\!1. 
\end{align*}
Therefore, $Z\!B$-splines are proper for spline representation of clr transformed density functions. Note that 
coincident additional knots in the knot sequence are crucial because they guarantee that the integral of $Z\!B$-splines over $I=[a,b]$ is zero. Equidistant or periodic additional knots do not guarantee this property. 

In the following,  
${\cal Z}_{k}^{\Delta\lambda}[a,b]$ denotes the vector space of polynomial splines of degree $k\in\mathbb{N}_0$, defined on a finite interval $I=[a,b]$ with the sequence of knots $\Delta\lambda = \{\lambda_i\}_{i=0}^{g+1}$ and having zero integral on $[a,b]$,
\begin{equation*}\label{Zspace}
{\cal Z}_{k}^{\Delta\lambda}[a,b] \, := \, \left\lbrace s_k(x) \in {\cal S}_{k}^{\Delta\lambda}[a,b]\, : \, \int\limits_I s_k(x) \, \mbox{d}x = 0 \right\rbrace \subset {\cal S}_{k}^{\Delta\lambda}[a,b].
\end{equation*}
The dimension of ${\cal Z}_{k}^{\Delta\lambda}[a,b]$ is finite, 
$$\dim({\cal Z}_{k}^{\Delta\lambda}[a,b]) \;= \; g+k,$$
and smaller by one than the dimension of ${\cal S}_{k}^{\Delta\lambda}[a,b]$ due to the zero integral condition, see \cite{compositional}. 
Finally, the set of $g+k$ $Z\!B$-splines $Z_{i}^{k+1}(x)$, $i=-k,\ldots, g-1$ defined on the set $\Delta \Lambda$ \eqref{exknots} with the coincident additional knots 
forms a basis of the space ${\cal Z}_{k}^{\Delta\lambda}[a,b]$, see \cite{compositional} for more details. 
Thus, every spline $s_{k}(x)\in{\cal Z}_{k}^{\Delta\lambda}[a,b]$ has a unique representation
\begin{equation*}
    s_{k}\left(x\right)=\sum\limits_{i=-k}^{g-1}z_{i}Z_{i}^{k+1}\left(x\right),
\end{equation*}
which can be written in a matrix notation as
$$
s_{k}(x) \, = \, \mathbf{Z}_{k+1}(x)\mathbf{z},
$$
where $\mathbf{Z}_{k+1}(x)=\left(Z_{-k}^{k+1}(x),\dots,Z_{g-1}^{k+1}(x)\right)$ is a vector 
of $Z\!B$-spline functions and  $\mathbf{z}$ is a vector of $Z\!B$-spline coefficients $\mathbf{z}= \left(z_{-k},\dots,z_{g-1} \right)^{\top}$.
The relation between $Z\!B$-spline and $B$-spline representations is given by
$$
s_{k}(x) \, = \, \mathbf{Z}_{k+1}(x)\mathbf{z} \, = \, \mathbf{B}_{k+1}(x)\mathbf{b} = \, \mathbf{B}_{k+1}(x)\mathbf{DKz},
$$
where matrices $\mathbf{D} \in\mathbb{R}^{g+k+1, g+k+1}$, $\mathbf{K} \in\mathbb{R}^{g+k+1, g+k}$ follow from \eqref{defZB}:
\begin{align*}
  \mathbf{D}=(k\!+\!1)\,\mbox{diag}\left(\dfrac{1}{\lambda_{1}-\lambda_{-k}},\ldots,\dfrac{1}{\lambda_{g+k+1}-\lambda_{g}}\right)\!, \, 
  \mathbf{K}=\left(\begin{array}{rrrrrr}
	               1 &  0 & 0 & \cdots & 0 & 0 \\
                  -1 &  1 & 0 & \cdots & 0 & 0 \\
                   0 & -1 & 1 & \cdots & 0 & 0 \\
                   \vdots &  \vdots & \ddots & \ddots & \vdots & \vdots \\
                   0 &  0 & 0 &\cdots & -1 & 1 \\
                   0 &  0 & 0 & \cdots & 0 & -1
                 \end{array}\right).
\end{align*}
We refer the reader to  \cite{compositional} for a detailed description.

\subsection{Orthogonalization}

The orthogonalization of the $Z\!B$-spline basis 
offers significant advantages in practical applications.  
We demonstrate a need of an efficient orthogonal process on simplicial functional principal component analysis in Section \ref{application}.

In this section, we provide a general description of orthogonalization. Later in Section \ref{orthogonalization}, we discuss three particular approaches: Gram-Schmidt orthogonalization (one-sided orthogonalization), symmetric Gram-Schmidt orthogonalization (two-sided orthogonalization) and dyadic orthogonalization described in \cite{podgorski} for the construction of an
orthonormal $B$-spline basis with local support. This dyadic orthogonalization approach can also be applied to the construction of the $Z\!B$-spline basis.
Regardless of the specific approach used, orthogonalization can be represented by 
a linear transformation $\Phi$ that maps the $Z\!B$-spline basis to an orthogonal spline basis. 
The new orthogonal basis functions are denoted as $O_{i}^{k+1}(x)$, $i=-k, \ldots, g-1$. This transformation can be converted to matrix formulation 
$$
\mathbf{O}_{k+1}(x) =  \Phi \mathbf{Z}_{k+1}(x),
$$
where $\mathbf{O}_{k+1}(x)=(O_{-k}^{k+1}(x), \ldots,  O_{g-1}^{k+1}(x))$ is a vector of corresponding basis functions that are orthogonal, i.e.
$$
\int_I \mathbf{O}_{k+1}(x)^\top \mathbf{O}_{k+1}(x) \mathrm{d} x = \mathbf{I}.
$$
Then splines $s_k(x) \in  {\cal Z}_{k}^{\Delta\lambda}[a,b]$ can be represented as a linear combination of the new basis $O_{i}^{k+1}(x)$, $i=-k, \ldots, g-1$
\begin{equation}\label{olc}
    s_{k}\left(x\right)= \sum\limits_{i=-k}^{g-1}o_{i}O_{i}^{k+1}\left(x\right) = \mathbf{O}_{k+1}(x) \mathbf{o},
\end{equation}
where $\mathbf{o} = (o_{-k}, \ldots , o_{g-1})^{\top}$ are coefficients of the new spline basis representation.

\subsection{Smoothing splines}\label{smmothing}

The next step in spline approximation of PDFs is to reconstruct the underlying density based on discrete distributional observations, usually resulting from the aggregation of the raw values using histograms \citep{compositional}. The approximation is formulated as a linear combination of basis functions and the quality of the estimate is usually measured by the mean square error. However, since the underlying density is assumed to be smooth, a smoothing spline is employed. The smoothing spline represents a trade-off between achieving the best fit to the observed data and ensuring that the resulting approximation remains smooth.

We suppose to have discrete observations of a probability density function. These data are transformed with a discrete version of the clr transformation and the resulting data points are denoted as $(x_{i},y_{i})$, where $a\leq x_{i}\leq b$, $i=1, \ldots, n$.
The desired smoothing spline $s_{k}\in{\cal Z}_{k}^{\Delta\lambda}[a,b]$ minimizes the following functional,
\begin{equation*}
  \mathcal{J}_l(s_k) \, = \, (1-\alpha)\int\limits_a^b\left[s_k^{(l)}(x)\right]^2\,\mathrm{d} x \, + \, \alpha\sum\limits_{j=1}^{n} w_j\left[y_j-s_k(x_j)\right]^2,
\end{equation*}
where the weights used in the approximation are denoted as $w_{i}>0$, $i=1,\ldots,n$. The parameter $\alpha\in(0, 1]$ is a smoothing parameter that measures the compromise between the best fit to the data and the variability of the approximation controlled by the order of the spline derivative $l\in\left\{1,\ldots,k-1\right\}$. 
The functional $\mathcal{J}_l$ can be rewritten in a matrix notation in the form of a~quadratic function,
\begin{equation*}
  J_l(\mathbf{o}) \, = \, (1-\alpha) \mathbf{o}^{\top} \mathbf{N}_{kl} \mathbf{o} + \alpha [\mathbf{y} - \mathbf{O}_{k+1}(\mathbf{x})\mathbf{o}]^{\top} \mathbf{W}   [\mathbf{y} - \mathbf{O}_{k+1}(\mathbf{x})\mathbf{o}],
\end{equation*}
where $\mathbf{x}=\left(x_{1},\ldots,x_{n}\right)^{\top}$, $\mathbf{y}=\left(y_{1},\ldots,y_{n}\right)^{\top}$,
$\mathbf{w}=\left(w_{1},\ldots,w_{n}\right)^{\top}$, $\mathbf{W}=diag\left(\mathbf{w}\right)$,
$\mathbf{N}_{kl}$ is a~positive semidefinite matrix
$$
\mathbf{N}_{kl}=\left( n_{ij}^{kl}\right)_{i,j=-k}^{g-1}, \quad n_{ij}^{kl}= \int_I (O_i^{k+1}(x))^{(l)} (O_j^{k+1}(x))^{(l)} \mathrm{d} x
$$
and $\mathbf{O}_{k+1}(\mathbf{x})$ is a collocation matrix of orthogonalized $Z B$-splines in given data, i.e.
$$
\mathbf{O}_{k+1}(\mathbf{x}) \, = \, 
\left(
\begin{array}{ccc}
     O_{-k}^{k+1}(x_1) & \cdots &  O_{g-1}^{k+1}(x_1) \\
     \vdots  & & \vdots\\
     O_{-k}^{k+1}(x_n) & \cdots & O_{g-1}^{k+1}(x_n)
\end{array}
\right).
$$
Our objective now is to determine the minimum of the quadratic function $J_l(\mathbf{o})$, which can be written as
\begin{equation}\label{quadr}
  J_l(\mathbf{o}) \, = \, \mathbf{o}^{\top} \left[ (1-\alpha) \mathbf{N}_{kl} \, + \, \alpha\, \mathbf{O}_{k+1}(\mathbf{x})^{\top}  \mathbf{W} \mathbf{O}_{k+1} (\mathbf{x}) \right] \mathbf{o} \, - \, 2\,\alpha \mathbf{o}^{\top}\mathbf{O}_{k+1}(\mathbf{x})^{\top} \mathbf{Wy} \, + \, \alpha\, \mathbf{y}^{\top}\mathbf{Wy}.
\end{equation}
It is obvious that this function has just one minimum if and only if its Hessian is positive definite. From (\ref{quadr}) one can see that
$$
(1-\alpha) \mathbf{N}_{kl} \, + \, \alpha\, \mathbf{O}_{k+1}(\mathbf{x})^{\top}  \mathbf{W} \mathbf{O}_{k+1} (\mathbf{x}) \; \mbox{is p.d.} \quad \Leftrightarrow \quad\mathbf{O}_{k+1}(\mathbf{x}) \; \mbox{is of full column rank.}
$$
The collocation matrix $\mathbf{O}_{k+1}(\mathbf{x})$ is of full column rank iff
there exists $\{u_{-k},\cdots,u_{g-1}\} \subset \{x_1,\cdots,x_n\}$ with $u_i<u_{i+1}$, such that $\lambda_i<u_i<\lambda_{i+k+1}$, $\forall i=-k,\cdots,g-1$.
Then in the case of positive definite function $J_l(\mathbf{o})$ its minimum can be expressed as
\begin{equation}\label{min}
\mathbf{o}^* \, = \, 
\left[ (1-\alpha) \mathbf{N}_{kl} \, + \, \alpha\, \mathbf{O}_{k+1}(\mathbf{x})^{\top}  \mathbf{W} \mathbf{O}_{k+1} (\mathbf{x}) \right]^{-1} 
\mathbf{O}_{k+1}(\mathbf{x})^{\top} \mathbf{Wy}
\end{equation}
and the corresponding smoothing spline $s_k^*(x)$ for given data is 
\begin{equation*}
s_{k}^*\left(x\right) \, = \, \sum\limits_{i=-k}^{g-1}o_{i}^* O_{i}^{k+1}\left(x\right) \, = \, \mathbf{O}_{k+1}(x) \mathbf{o}^*,
\end{equation*}
where $\mathbf{o}^* = (o_{-k}^*, \ldots , o_{g-1}^*)^{\top}$ is the vector of coefficients for related basis functions.

\section{Dyadic orthogonalization}
\label{orthogonalization}

A spline basis can be orthogonalized in several ways. One of the well-known approaches is the Gram-Schmidt (GS) orthogonalization method, which recursively orthogonalizes each element of the basis with respect to an already orthogonal subset of basis functions. However, GS orthogonalization violates the locality of a spline basis with small local supports, since the support of corresponding orthogonalized functions grows up to the entire interval. We refer to the primal GS methods as one-sided GS. The more efficient approach is a modified GS orthogonalization called the symmetric two-sided GS method. The idea is to use one-sided GS applied to the left half of the interval in a right-to-left manner and to the right half of the interval in a left-to-right manner. The remaining basis splines located in the center of the interval are symmetrically orthogonalized with respect to every basis function. For details, we refer to \cite{splinets, mason}. The size of the total support can be further reduced using the splinet approach that is briefly described in the next subsection. Then, we introduce $Z\!B$-splinets as an effective tool for orthogonalizing $Z\!B$-splines. Finally, we discuss the quality of these three approaches in terms of locality and computational efficiency.

\subsection{Splinets}
Splinets as an efficient approach for orthogonalization of the $B$-splines were first introduced in \cite{splinets}. The main advantage is that splinets preserve some of positive properties of $B$-splines, including locality and computational efficiency. The locality is defined by the small size of the spline support and computational efficiency results from the small number of orthogonalization steps required for the $B$-splines to be orthogonal.
Due to its ability to maintain locality, the preferred property of original splines, splinets are prioritized over the other two methods mentioned earlier, namely one-sided GS orthogonalization %
and two-sided GS orthogonalization. 

A better way to visualize this orthogonal basis of splines would be as a net rather than a sequence of orthogonalized functions. Splinets are performed using a recursion algorithm using the support levels that are presented in the form of a dyadic structure. Figure \ref{dyadic1} shows the dyadic structure for the first degree $B$-splines ({\it left}) and the corresponding splinets ({\it right}). The smallest support intervals are at the lower level and increasing support intervals over different layers in the dyadic structure. For $B$-splines of degree $k$, we bind every $k$ adjacent $B$-splines into a group called a $k$-tuplet, then build the dyadic structure. For more details about the splinet algorithm, see \cite{splinets, podgorski}. We note that these references describe the splinet approach applied to $B$-spline with zero boundary conditions.

\begin{figure}[ht]
  \centering

\includegraphics[width=0.45\textwidth,height=0.55\textwidth]{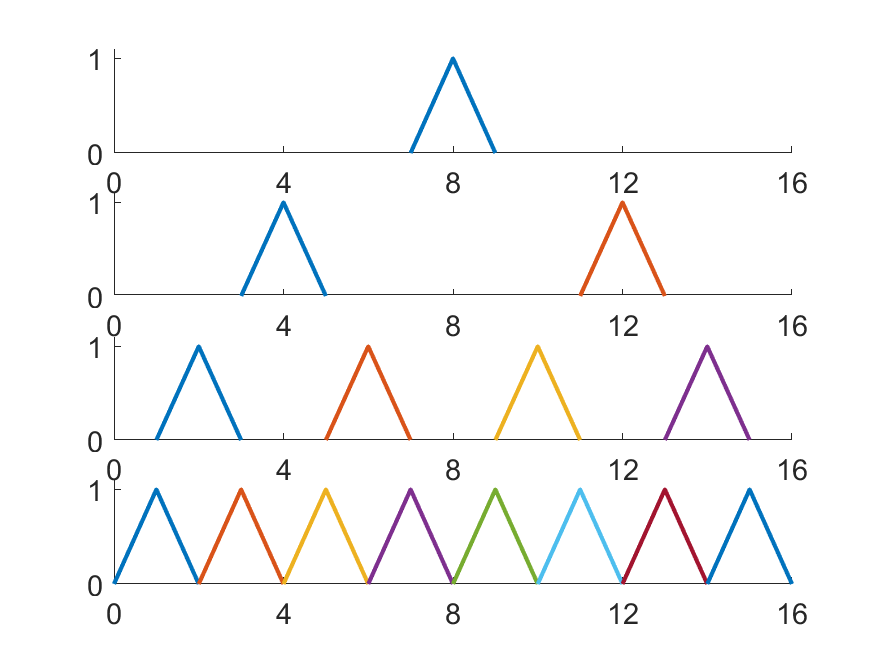}
\includegraphics[width=0.45\textwidth,height=0.55\textwidth]{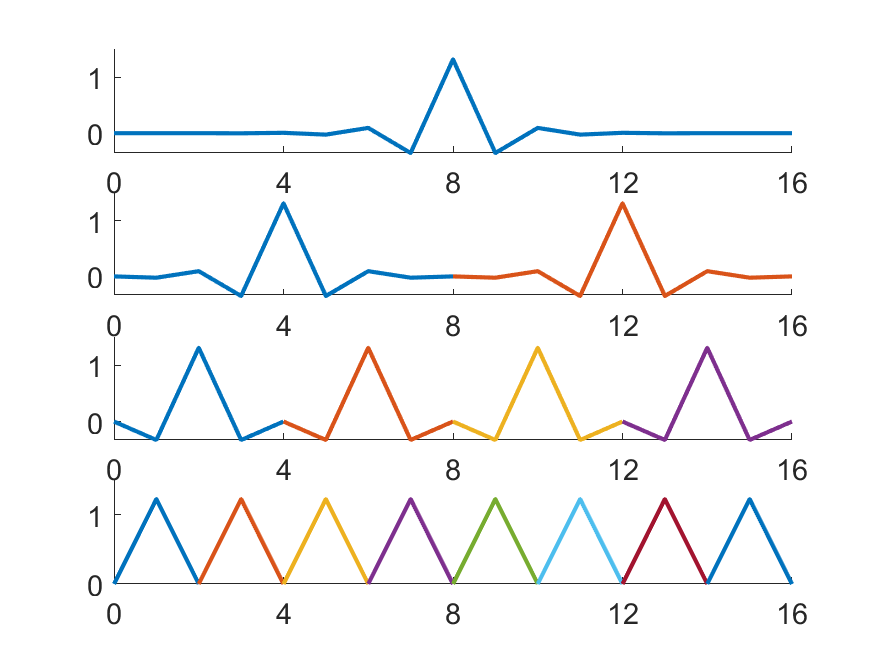}
  \caption{$B$-splines of degree 1 {\it left}, the corresponding $B$-splinets {\it right}}
\label{dyadic1}
\end{figure}

A fascinating result of the splinet algorithm is that one can get a partially orthogonal basis if the algorithm stops at some sufficiently large number $l$ of iterations. In a sense, the partially orthogonal basis is similar to the fully orthogonal basis with a minor error. However, it has the advantage of smaller support and a reduction in the number of inner products required compared to the fully orthogonal basis.

\subsection{ZB-splinets}
In this section, we apply the splinet approach to $Z\!B$-spline basis.
As $Z\!B$-splines of degree $k$ are defined as first derivatives of $B$-splines of degree $k+1$, it is easily realized that for $k \in \mathbb{N}_0$
$$
\operatorname{supp} Z_i^{k+1}(x)=\operatorname{supp} B_i^{k+2}(x)=\left[\lambda_i, \lambda_{i+k+2}\right]. 
$$
To construct the dyadic structure, we determine the number of equispaced inner knots as
$$
g=(2^N-1)(k+1)-k,
$$
where $N > 0$ denotes the number of support levels.
First, we bind every $k+1$ adjacent $Z\!B$-splines of degree $k$ into a $(k+1)$-tuplet (as we bind $k+1$ adjacent $B$-splines of degree $k+1$). Second, we structure the support level $N$ such that the tuplets at each level have disjoint supports. Figure \ref{Zsplintes1} is an example of the dyadic structure of first-degree $Z\!B$-splines, that we bind every two adjacent $Z\!B$-splines into a tuplet, and the support level $N=4$. A similar dyadic structure of quadratic $Z\!B$-splines is presented in Figure \ref{Zsplintes2}.

\subsubsection*{Dyadic orthogonalization}
The algorithm for constructing $Z\!B$-splinets follows these steps:
\begin{enumerate}
    \item[Step 1:] At the lower level, the tuplets have disjoint supports, i.e. each tuplet is orthogonal with respect to the other tuplets at the lower level. It remains to orthogonalize the splines in each tuplet with the symmetric two-sided GS algorithm described below. 
    \item[Step 2:] Orthogonalize all the upper levels with respect to the lower one. With the dyadic structure in place, each tuplet only needs to be orthogonalized with respect to the two adjacent tuplets (one from each side).
    \item[Step 3:] The splines at the lower level are orthogonal to all other splines and form the lower level of the $Z\!B$-splinets structure. By removing the lower level and repeating these steps for the remaining $N-1$ support levels, the entire set of $Z\!B$-splines can be orthogonalized in a recursive manner.
\end{enumerate}

Although an alternative approach can be used to construct this dyadic structure, we found that our choice is a natural way to imitate the splinets algorithm. 

\begin{figure}[ht]
\centering
\includegraphics[width=0.49\textwidth,height=0.55\textwidth]{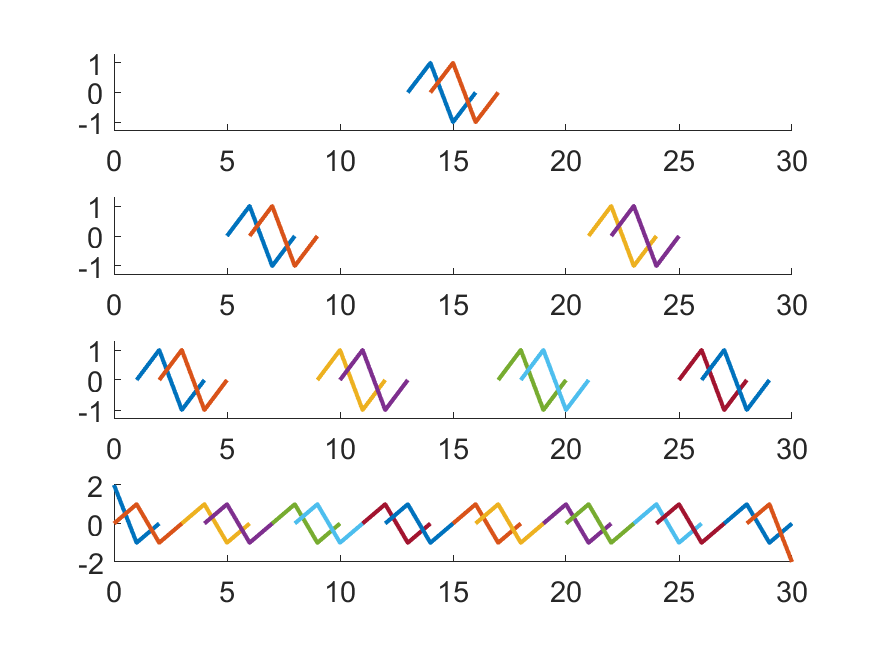}
\includegraphics[width=0.49\textwidth,height=0.55\textwidth]
{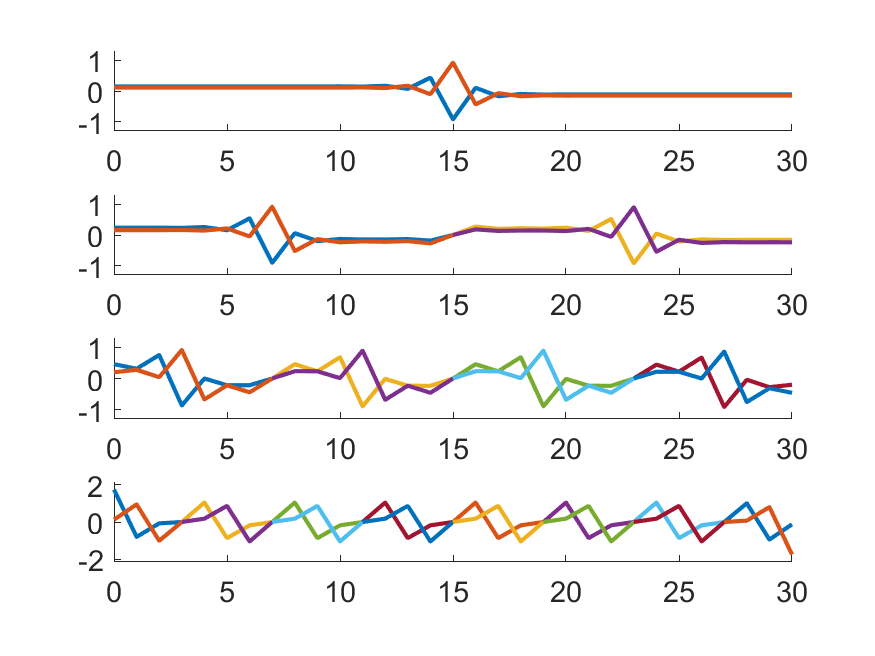}
\caption{$Z\!B$-splines of degree 1 ({\it left}), the corresponding $Z\!B$-splinets ({\it right}).}
\label{Zsplintes1}
\end{figure}

\begin{figure}[ht]
\centering
\includegraphics[width=0.49\textwidth,height=0.55\textwidth]
{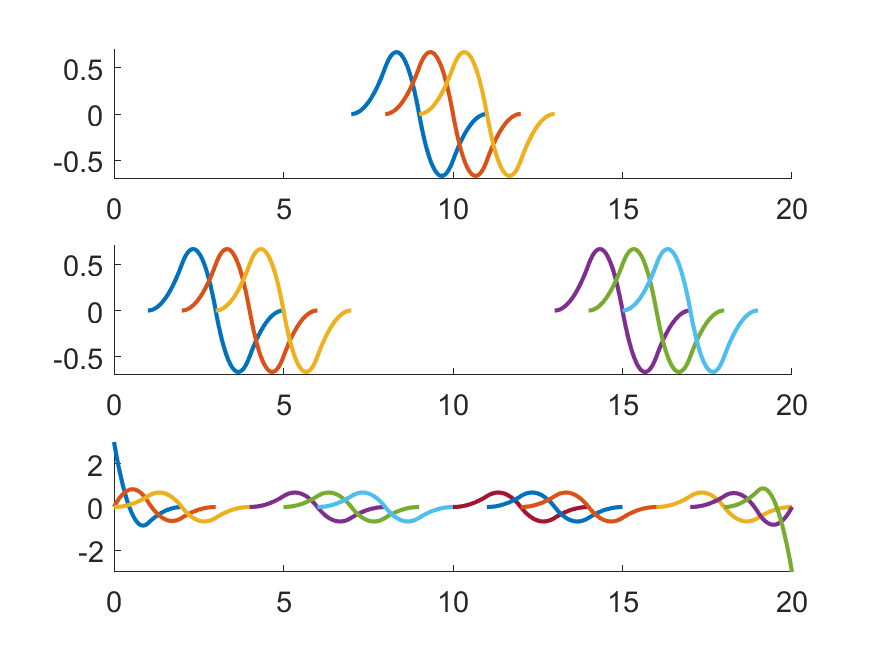}
\includegraphics[width=0.49\textwidth,height=0.55\textwidth]{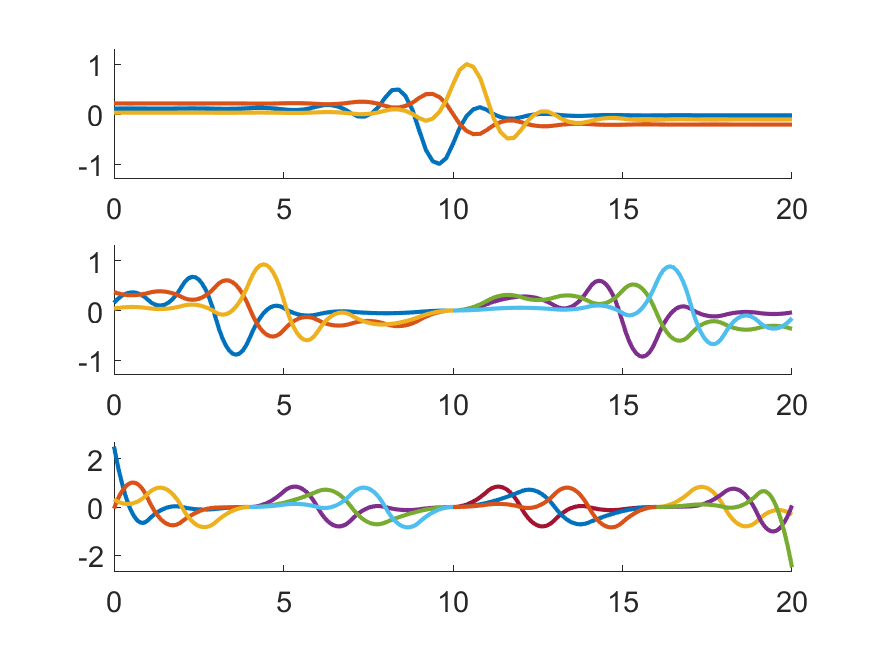}
  \caption{$Z\!B$-splines of degree 2 ({\it left}), the corresponding $Z\!B$-splinets ({\it right}).}
  \label{Zsplintes2}
  \end{figure}

\subsubsection*{Symmetric two-sided Gram-Schmidt}
A modification of the Gram-Schmidt orthogonalization method that preserves the symmetric property was proposed in \cite{Redd}. 
We performed the symmetric two-sided GS to ordered \mbox{$Z\!B$-splines} within the dyadic orthogonalization with respect to a central point which is here chosen as a midpoint of the interval determined by their supports.

\begin{enumerate}
    \item[Step 1:] Perform one-sided GS orthogonalization on $Z\!B$-splines with supports on each of two halves of the interval separately. The left-to-right orthogonalization is performed to the $Z\!B$-splines which have their supports on the left half, and the right-to-left orthogonalization is applied to the $Z\!B$-splines with their supports in the right half of the interval, respectively.
    \item[Step 2:] Orthogonalize remaining central $Z\!B$-splines. Bind the first and the last of the remaining  $Z\!B$-spline and orthogonalize them with respect to already orthogonalized ones. Then, perform the pairwise symmetric orthogonalization as follows
    \begin{align*}
        \tilde z_i &= \left( \frac{1}{\sqrt{1+\langle z_i,z_j \rangle}}  + \frac{1}{\sqrt{1-\langle z_i,z_j \rangle}}\right) \frac{z_i}{2} + \left( \frac{1}{\sqrt{1+\langle z_i,z_j \rangle}}  - \frac{1}{\sqrt{1-\langle z_i,z_j \rangle}}\right) \frac{z_j}{2} \\
        \tilde z_j &= \left( \frac{1}{\sqrt{1+\langle z_i,z_j \rangle}}  - \frac{1}{\sqrt{1-\langle z_i,z_j \rangle}}\right) \frac{z_i}{2} + \left( \frac{1}{\sqrt{1+\langle z_i,z_j \rangle}}  +\frac{1}{\sqrt{1-\langle z_i,z_j \rangle}}\right) \frac{z_j}{2}
    \end{align*}
    where the pair $z_i, z_j$ is the first pair of the remaining central splines, and $\langle z_i,z_j \rangle$ denotes their scalar product. 

Repeat this procedure with a pair of second and last-to-one remaining $Z\!B$-splines and so forth with other pairs of central $Z\!B$-splines until either all central $Z\!B$-splines are orthogonalized (the total number of splines is even) or only one $Z\!B$-spline remains (the total number of splines is odd) and orthogonalize the last one with respect to all other previously orthogonalized splines.
\end{enumerate}

We refer to \cite{splinets} for further details on splinet approach in the context of $B$-splines orthogonalization.

\subsection{Efficiency of the $Z\!B$-splinets}
\label{eff}
The main advantage of $Z\!B$-splinets is that they acquire the locality and computational efficiency of splinets. The locality is indicated by the small size of the total support of $Z\!B$-splinets. For the three different orthogonalizations we have already introduced, we compare the relative total supports, which is the ratio of the total support size of a basis over its domain. Computational efficiency can be assessed by the number of inner products one has to evaluate in the orthogonalization process.
In the sequel, we will limit our discussion to the fully dyadic case and equispaced knots to simplify computations and facilitate a clear discussion. 
The following propositions state these two properties, and the proofs follow those in \cite{splinets} derived for $B$-splines.

\begin{proposition}
Let $\Delta \Lambda=\{\lambda_i\}_{i=-k}^{g+k+1}$ be a dyadic set of knots with equispaced $g$ inner knots \eqref{exknots} where $g= (2^N-1)(k+1)-k$ for $N> 0$. Then the relative total support of orthogonalized $Z\!B$-splines of degree $k\in \mathbb{N}_0$
\begin{itemize}
    \item[(i)]  for one-sided Gram-Schmidt orthogonalization is \ $\displaystyle\frac{g}{2} +k+1-\frac{1}{g+1}$,
    \item [(ii)] for symmetric two-sided Gram-Schmidt orthogonalization is \ $\displaystyle\frac{g}{4} +k+\frac{7}{4} -\frac{2}{g+1}$,
    \item [(iii)] for splinet orthogonalization is \  $\displaystyle  (k+1)\, {\log_2 \left(\frac{g+k}{k+1} +1\right)}$.
\end{itemize}
\end{proposition}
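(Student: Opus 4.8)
The plan is to reduce every case to counting, for each orthogonalized basis function, the number of elementary subintervals $[\lambda_i,\lambda_{i+1}]$ contained in its support; since the knots are equispaced, each subinterval has length $\eta/(g+1)$, so the relative total support is the grand total of these counts divided by $g+1$. Throughout I would use that $\operatorname{supp} Z_i^{k+1}=[\lambda_i,\lambda_{i+k+2}]$ spans $k+2$ subintervals, that consecutive $Z\!B$-splines overlap, that the coincident boundary knots clamp the extreme supports, and that the number of $Z\!B$-splines is $g+k=(2^N-1)(k+1)$. For (i) I order the splines left to right: one-sided Gram--Schmidt makes the $j$-th function a combination of the first $j$ inputs, whose supports are nested from the left, so its support is $[\lambda_0,\min(\lambda_{j+1},b)]$, the right end being clamped at $b$ for the last $k$ functions. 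The count is $\min(j+1,g+1)$, and $\sum_{j=1}^{g+k}\min(j+1,g+1)=\tfrac{g(g+1)}{2}+g+k(g+1)$, which divided by $g+1$ gives $\tfrac{g}{2}+k+1-\tfrac{1}{g+1}$.

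For (ii) I would first record that $g+1$ is always even, so the midpoint is the knot $\lambda_c$ with $c=(g+1)/2$, and classify the splines by support: the left block (indices $i=-k,\dots,c-k-2$) lying in $[\lambda_0,\lambda_c]$, the symmetric right block in $[\lambda_c,b]$, and the $k+1$ central splines (indices $i=c-k-1,\dots,c-1$) straddling $\lambda_c$. Left-to-right Gram--Schmidt on the left block produces supports $[\lambda_0,\lambda_{i+k+2}]$ with counts $2,3,\dots,c$, summing to $\tfrac{(2+c)(c-1)}{2}$, and the right block contributes the same by symmetry. Each central spline is orthogonalized against both blocks, whose outermost orthogonalized members reach $\lambda_0$ and $b$ while overlapping every central spline, so it inflates to the full support $[a,b]$, whereas the pairwise symmetric step only mixes central splines; the center therefore contributes $(k+1)(g+1)$. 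Summing $(2+c)(c-1)+(k+1)(g+1)$ and dividing by $g+1$ with $c=(g+1)/2$ yields $\tfrac{g}{4}+k+\tfrac{7}{4}-\tfrac{2}{g+1}$.

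For (iii) I would use the dyadic tree of the $2^N-1$ $(k+1)$-tuplets, with $2^{N-\ell}$ tuplets on level $\ell$ ($\ell=1,\dots,N$ from the bottom). The key claim, which I would prove by induction on the level using the recursive removal of the bottom layer, is that after splinet orthogonalization the extended supports of the tuplets on any fixed level are mutually disjoint and tile $[a,b]$; consequently the summed tuplet support on each level is exactly $g+1$ subintervals. Since each tuplet carries $k+1$ functions that each occupy the whole extended tuplet support, every level contributes relative support exactly $k+1$, and summing over the $N$ levels gives $(k+1)N$. Finally $\tfrac{g+k}{k+1}=2^N-1$ turns this into $(k+1)\log_2\!\left(\tfrac{g+k}{k+1}+1\right)$.

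The step I expect to be most delicate is the center analysis in (ii) --- verifying that exactly $k+1$ splines straddle the midpoint, that each really inflates to the entire interval, and that the symmetric pairwise formula enlarges no support further --- together with the tiling induction in (iii), where I must check that orthogonalizing a tuplet only against its two adjacent lower-level neighbours suffices to keep each level's extended supports disjoint and space-filling. The boundary clamping from the coincident knots, which shortens the extreme supports and is exactly what produces the constants and the $-1/(g+1)$ and $-2/(g+1)$ corrections, also needs to be tracked precisely rather than replaced by the leading asymptotics.
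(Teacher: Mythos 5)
Your proposal is correct and follows essentially the same route as the paper's proof: counting elementary subintervals for nested supports in (i), splitting into two one-sided halves plus $k+1$ full-support central splines in (ii), and observing that each of the $N$ levels of the dyadic net contributes relative support exactly $k+1$ in (iii), with all your arithmetic (including the $-1/(g+1)$ and $-2/(g+1)$ corrections) checking out. The only additions beyond the paper are useful bookkeeping --- verifying that $g+1$ is even and stating the per-level tiling claim as an explicit induction --- which the paper leaves implicit by citing the analogous $B$-splinet results.
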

\begin{proof}
    (i) The first $Z\!B$-spline, obtained through orthogonalization using a one-sided Gram-Schmidt approach and defined on $\Delta \Lambda$ with coinciding additional knots,  always has a~relative support equal to $2/(g+1)$. The relative support of the subsequent elements in the basis are $3/(g+1), 4/(g+1), \ldots, 1,1, \ldots, 1$, where the entire interval is a support for the last $k+1$ $Z\!B$-splines orthogonalized with one-sided Gram-Schmidt approach.
    
    (ii) The two-sided orthogonalized $Z\!B$-splines can be obtained by applying one-sided orthogonalization from both sides until the midpoint of the interval. This process yields two sets of splines with an individual relative support sequence of $2/(g+1), 3/(g+1), \dots, 1/2$. Additionally, we need to account for the final $k+1$ splines in the center, which have a relative support of one. Thus, the total support for the two-sided orthogonalized $Z\!B$-splines can be calculated as follows:
    $$
    2\left(\frac{1}{2(g+1)} \left(\frac{g+1}{2}+2 \right) \left( \frac{g-1}{2} \right)   \right)+(k+1)= \frac{g}{4} +k+\frac{7}{4}-\frac{2}{g+1}.
    $$
    
    (iii) The relative total support of the $Z\!B$-splinets can be determined based on the number of levels $N$. At each level, the relative total support of the splinet is always equal to $k+1$, which results in $(k+1)N$. Taking into consideration the relation $N = \log_2 \left(\frac{g+k}{k+1} +1\right)$, the formula for the relative total support follows.
\end{proof}

\begin{remark}
    It is worth noting that two-sided orthogonalization produces a relative total support that is approximately half compared to the one-sided orthogonalization. Both approaches result in a support size on the order of $n$, while the relative total support of the splinet is on the order of $\log(n)$.
\end{remark}

\begin{remark}
    From the previous proposition, it is evident that the total relative support of orthogonalized $Z\!B$-splines, whether obtained through one-sided or two-sided symmetric orthogonalization, depends on the location of the knots. In contrast, with splinets, the total relative support does not depend on the knot placement.
\end{remark}

\begin{proposition}
Consider the dyadic structure case for the $Z\!B$-splines of degree $k$. Then the number of evaluations of inner products needed
\begin{itemize}
\item[(i)] for one-sided orthogonalization is $(k+1)g +k(k-1)/2 -1$,
\item[(ii)] for two-sided orthogonalization is $(k+1)(2g-4) - k(k+1)/2$,
\item[(iii)] for splinet orthogonalization is 
$$
\left(\frac{5k+4}{2}\right)(g+k)-2\, \log_2\left(\frac{g+k}{k+1} +1\right)(k+1)^2 - \frac{k(k+1)}{2}.
$$

\end{itemize}
\end{proposition}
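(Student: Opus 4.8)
The plan is to reduce every count to the number of scalar products $\langle\cdot,\cdot\rangle$ that the respective procedure must actually evaluate, and to use throughout the band structure of the $Z\!B$-spline Gram matrix: since $\operatorname{supp}Z_i^{k+1}=[\lambda_i,\lambda_{i+k+2}]$, two $Z\!B$-splines meet on a set of positive measure exactly when $|i-j|\le k+1$, so each spline has at most $k+1$ nontrivial partners on either side. I will also use the support-growth pattern recorded in Proposition~1: in one-sided Gram--Schmidt the $m$-th orthonormalised function has the same right endpoint as its generating spline, so a projection $\langle Z_m,O_p\rangle$ can be nonzero only when $p\ge m-k-1$.

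For (i) this immediately gives the count. Orthonormalising the $m$-th spline ($m=1,\dots,g+k$) requires one scalar product for each admissible $p\in\{\max(1,m-k-1),\dots,m-1\}$, i.e. $\min(m-1,k+1)$ of them, so the total is
\[
\sum_{m=1}^{g+k}\min(m-1,k+1)=\frac{k(k+1)}{2}+(k+1)(g-1),
\]
which equals $(k+1)g+\tfrac{k(k-1)}2-1$. For (ii) I would split the basis into a left and a right block of $\tfrac{g-1}{2}$ splines each and a central $(k+1)$-tuplet. Each block is treated by one-sided Gram--Schmidt run towards the centre, costing $\tfrac{k(k+1)}2+(k+1)\big(\tfrac{g-1}{2}-k-1\big)$ by the computation of (i). The point that makes the two-sided count larger is that, because the two blocks are processed centre-wards, every orthonormalised block function extends up to the central point and therefore overlaps each of the $k+1$ central splines; hence the central tuplet must be projected against all $g-1$ block functions, contributing $(k+1)(g-1)$, and its internal symmetric orthogonalisation adds $\binom{k+1}{2}=\tfrac{k(k+1)}2$. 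Summing the two blocks, the central projections and the internal step collapses to $(k+1)(2g-4)-\tfrac{k(k+1)}2$.

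For (iii) I would organise the bookkeeping around the dyadic recursion and split the work into within-tuplet and between-tuplet scalar products. Each of the $2^N-1=\tfrac{g+k}{k+1}$ tuplets is internally orthonormalised by the symmetric procedure at the cost of $\binom{k+1}{2}=\tfrac{k(k+1)}2$ products. For the between-tuplet part I would argue that after internal orthonormalisation each tuplet spans its full $2(k+1)$-interval range, so adjacent tuplets overlap fully and a single upper--lower pair costs $(k+1)^2$ projections, while non-adjacent tuplets stay orthogonal. At round $r$ the reduced dyadic structure is a path of $2^{N-r+1}-1$ tuplets in which every edge joins the current bottom level to an upper level, giving $2^{N-r+1}-2$ such pairs; summing the geometric series
\[
\sum_{r=1}^{N-1}\bigl(2^{N-r+1}-2\bigr)=2^{N+1}-2N-2
\]
yields a between-tuplet total of $(k+1)^2(2^{N+1}-2N-2)$. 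Adding the within-tuplet contribution $(2^N-1)\tfrac{k(k+1)}2$, rewriting $2^{N+1}=2(2^N-1)+2$, and substituting $2^N-1=\tfrac{g+k}{k+1}$ and $N=\log_2\!\big(\tfrac{g+k}{k+1}+1\big)$ reproduces $\tfrac{5k+4}{2}(g+k)-2\log_2\!\big(\tfrac{g+k}{k+1}+1\big)(k+1)^2$, up to a single surplus $\tfrac{k(k+1)}2$ that must be subtracted.

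The main obstacle is precisely this last constant, the $-\tfrac{k(k+1)}2$ in (iii). The naive within/between split over-counts by exactly one internal orthogonalisation, and pinning down why one tuplet (a boundary/initialisation tuplet, or the central top-level one whose internal step is absorbed by the final cross step) needs $\tfrac{k(k+1)}2$ fewer products requires a careful treatment of the coinciding boundary knots and of the order in which tuplets are processed. A secondary delicate point is justifying rigorously that a full upper--lower tuplet pair really costs $(k+1)^2$ and not fewer products once the grown supports are accounted for; getting these overlap bookkeeping details right at the ends of the interval is where the argument must be most careful, whereas the bulk terms follow from the geometric-series summation above.
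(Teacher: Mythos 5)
Your arguments for (i) and (ii) are essentially identical to the paper's proof. For (i) the paper likewise assigns $0,1,\dots,k$ inner products to the first $k+1$ splines and $k+1$ to each of the remaining $g-1$, which is exactly your $\sum_{m}\min(m-1,k+1)$; for (ii) the paper uses the same two-blocks-plus-central-tuplet decomposition and counts the central $k+1$ splines as being orthogonalized against $g-1,g,\dots,g+k-1$ already-orthogonal functions, i.e.\ $(k+1)(g-1)+k(k+1)/2$, which coincides with your cross-projection-plus-internal split. For (iii) your per-round bookkeeping also matches: your $(k+1)^2\bigl(2^{N-r+1}-2\bigr)$ between-tuplet cost per round is the paper's $2(k+1)^2\bigl(2^{N-j}-1\bigr)$, and both charge $k(k+1)/2$ per internally orthogonalized tuplet (the paper, like you, asserts rather than proves that an upper tuplet needs exactly $2(k+1)^2$ products against its two bottom neighbours).

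The surplus $k(k+1)/2$ in (iii) that you flagged as your main obstacle is resolved by the comparison, though not in the way you conjectured: there is no boundary tuplet or absorbed initialization step to be found. The paper sums the per-iteration cost $2^{N-j-1}(k+1)(5k+4)-2(k+1)^2$ only over $j=1,\dots,N-1$, i.e.\ over reduced structures with $N,N-1,\dots,2$ levels, and each iteration's cost includes internal orthogonalization only of that iteration's bottom row. Hence only $2^{N-1}+2^{N-2}+\dots+2=2^N-2$ tuplets ever have their internal $k(k+1)/2$ counted; the single top-level tuplet remaining after the last counted iteration, whose $k+1$ splines are still mutually non-orthogonal, is never internally orthogonalized in the paper's accounting. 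That omitted step is exactly your surplus: your within/between split charges all $2^N-1$ tuplets, which is what the algorithm as stated in the paper (Step 1 applies at every recursion level, including the final one-tuplet level) actually requires. A small check confirms this: for $k=1$, $N=2$ (so $g=5$, six splines in three pairs) the procedure needs $2+8+1=11$ inner products; your formula gives $11$, while the proposition's formula gives $10$. So your count is the faithful one, and the terminal $-k(k+1)/2$ in the stated formula is an artifact of the paper's summation stopping before the last iteration rather than something you should have been able to derive.
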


\begin{proof}
  (i) In one-sided orthogonalization, the first $k+1$ splines require $0, 1, \dots, k$ inner products, respectively. Since the supports of $Z\!B$-splines are disjoint, the remaining $g-1$ splines all require $k+1$ inner products.
  Therefore, the total count of inner products can be calculated as $(g-1)(k+1) + 1 + \dots + k = (k+1)g + \frac{k(k-1)}{2} - 1$.
   
   (ii) The number of inner products required for two-sided orthogonalization involves the inner products needed for one-sided orthogonalization of splines from both sides of the interval until the midpoint, followed by the orthogonalization of $k+1$ splines located in the center of the interval.
   In one-sided orthogonalization, the first $k+1$ splines require $k(k+1)/2$ inner products. The remaining $(g+k)-3(k+1) = g-2k-3$ splines need $k+1$ inner products for orthogonalization due to the disjointness of supports discussed in (i). The inner $k+1$ splines are gradually orthogonalized with respect to those already orthogonal ones, i.e. they are orthogonalized to $g-1, g, \ldots, g+k-1$ splines that require $(k+1)(2g+k-2)/2$ inner products.
   Therefore, the total number of inner products needed for two-sided orthogonalization is equal to 
   $$
   2\left( \frac{k(k+1)}{2}\right)+(g-2k-3)(k+1)+ \frac{(k+1)}{2}(2g+k-2) = 
   (k+1)(2g-4) -  \frac{k(k+1)}{2}.
   $$
   
   (iii) For a splinet, orthogonalization of each $k+1$-tuplet requires $k(k+1)/2$ inner products. Thus, the bottom row in the dyadic structure with $N$ rows needs $2^{N-1}k(k+1)/2$ inner products. There are $2^{N-1}-1$ $(k+1)$-tuplets above the bottom row, and each of them needs to be orthogonalized only with respect to two neighboring $(k+1)$-tuplets taken from the top row of the already orthogonalized portion of the dyadic net of splines. Hence, each of these non-orthogonalized 
   $(k+1)$-tuplets requires 
   $2(k+1)^2(2^{N-1}-1)$ inner products. 
   The total number of the inner products that is required at this step of the recurrence is
    $$
    2^{N-1}k(k+1)/2+2(k+1)^2(2^{N-1}-1)=2^{N-2}(k+1)(5k+4)-2(k+1)^2.
    $$
    As a result of each loop, the dimensions of the dyadic structure are reduced by one, see Algorithm 4 in \cite{splinets}. Thus  
\begin{multline*}
\sum_{j=1}^{N-1} \left(2^{N-j-1}\left((k+1)(5k+4)\right)-2(k+1)^2\right)= \frac{(k+1)(5k+4)}{2} \sum_{j=1}^{N-1}2^j-2(N-1)(k+1)^2\\
=\frac{(k+1)(5k+4)}{2}  \left(2^{N}-2\right)-2(N-1)(k+1)^2.
\end{multline*}
The final result follows from the relations $2^N= \frac{g+k}{k+1}+1$ and $N=\log_2(\frac{g+k}{k+1} +1)$. 
\end{proof}

\section{Application}
\label{application}
To showcase the benefits of using an orthogonal basis for representation of clr transformed densities and to compare the different orthogonalizing approaches, an empirical dataset from the field of demography was analysed. First, however, a short introduction to functional principal component analysis is offered to the reader as it is a tool of FDA where the orthogonality of the underlying basis plays a key role.

\subsection{Functional principal component analysis}
In many applications of multivariate and/or high-dimensional statistics, it is necessary to start with reducing the dimensionality of the data, i.e. with expressing the data in a new reduced vector space while keeping a sufficient amount of the original information intact. Principal component analysis (PCA)  handles this by constructing new 
variables as linear combinations of the original ones
which better capture  
the data variability.
These variables, commonly called principal components, are 
obtained as eigenvectors of the sample covariance matrix  and therefore
create a sequence of mutually orthogonal vectors. This further ensures the interpretability of the respective eigenvalues as the amount of variability captured by the corresponding principal component. This way, it is possible to  decrease the number of considered variables (in this case, principal components) while securing the main sources of variability and possibly filtering out potential noise. 

In the context of FDA, one refers to functional principal component analysis (FPCA, \cite{ramsay05}).  For PDFs, a simplicial functional principal component analysis (SFPCA) was introduced in \cite{SFPCA}, where the analysis is performed directly on clr transformed PDFs.
In this case, the newly constructed functional variables (called functional principal components) are derived as the eigenfunctions obtained from the sample covariance operator. Using a spline representation of functional data enables a simplification of this procedure since the matrix of spline coefficients serves as the discrete representation of the data with respect to the given spline basis \cite{ramsay05}. Analogously, the eigenfunctions of the sample covariance operator can be represented equally as the linear combinations of the basis functions. For orthogonal spline basis, the vectors of coefficients correspond to the eigenvectors of the sample covariance matrix built on the data matrix of coefficients.
Thus, using orthogonal basis such as $Z\!B$-splinets directly for the representation of the transformed PDFs reduces the problem to well-known standard multivariate PCA \citep{johnson07} of the coefficient matrix without any additional corrections  
and computations directly using $Z\!B$-splinet coefficients are possible \cite{SFPCA, DPCA}. 

In the multivariate case, a modification of 
PCA, called sparse principal component analysis, enables further simplification of the data structure. This approach alters the eigenvectors of the covariance matrix in such a way that the low absolute values are forced to 0 via a penalized regression using the elastic net penalty \cite{erichson18}. Essentially, the choice of sparsity parameter
indirectly determines the sparsity of the resulting principal components (i.e. the number of zero-valued elements in individual eigenvectors), which helps with the search for the true patterns. This can be seen as the trade-off between the simplicity of the component structure and the amount of explained variability maintained, which naturally decreases with higher values of the sparsity parameter. 
While the functional counterpart of sparse PCA in the classical sense aims to produce functional principal components which are only nonzero in subregions \cite{NIE2020}, we focus here instead on simplifying the linear combination of the basis functions, where $Z\!B$-splinets can profit naturally from their local domain. Considering the data matrix of spline coefficients, this approach can be extended again even for the case of functional data such as PDFs.

\subsection{Demographic example}

In this section, different basis selections are showcased on an empirical demographic example through functional principal component analysis, where an orthogonal basis is beneficial and enables to reduce the FPCA problem to a standard multivariate PCA of a corresponding coefficient matrix.
The aim is to perform a comparison of different orthogonalization approaches applied on the $Z\!B$-spline basis rather than a detailed statistical analysis and interpretation of its results as done in \citep{SFPCA} for the same dataset.
The original dataset describing population age distributions in Upper Austria was obtained as pre-aggregated set in the form of histograms representing the relative frequency of men and women in 19 time intervals with centers $x_i$, $x_i= 2+5(i-1)$, $i=1,\ldots,19$, living in 57 municipalities. For illustration,  two observations corresponding to one selected municipality are depicted in Figure \ref{fig:example_1_JM} (right), where the red and blue symbols represent relative frequencies for women and men, respectively.
The histogram data were transformed with discrete clr transformation and then smoothed by a~quadratic smoothing spline ($k=2$) described in Section~\ref{smmothing}. The smoothing was performed 
with respect to the first derivative of the spline, i.e. $l=1$, the parameter $\alpha$ was fixed to $\alpha = 0.5$ and a sequence of nine equidistant knots on $[a,b]=[0,95]$: $\Delta\lambda \, = \{\ \lambda_i\}_{i=0}^{8}$,
 $\lambda_i = 95i/8$ was used.

The resulting smoothing splines in $L_0^2[0,95]$  for the clr transformation of the two selected instances 
are depicted in Figure \ref{fig:example_1_JM} (left). Using inverse clr transformation (see \cite{boogaart14} for more details) the smoothing splines in the Bayes space are obtained, see Figure~\ref{fig:example_1_JM} (right). The corresponding smoothing splines for the whole dataset consisting of 114 observations are shown in Figure~\ref{fig:example_JM}. 

\begin{figure}[ht!]
\centering
\begin{tabular}{cc}
\includegraphics[width=0.5\textwidth]{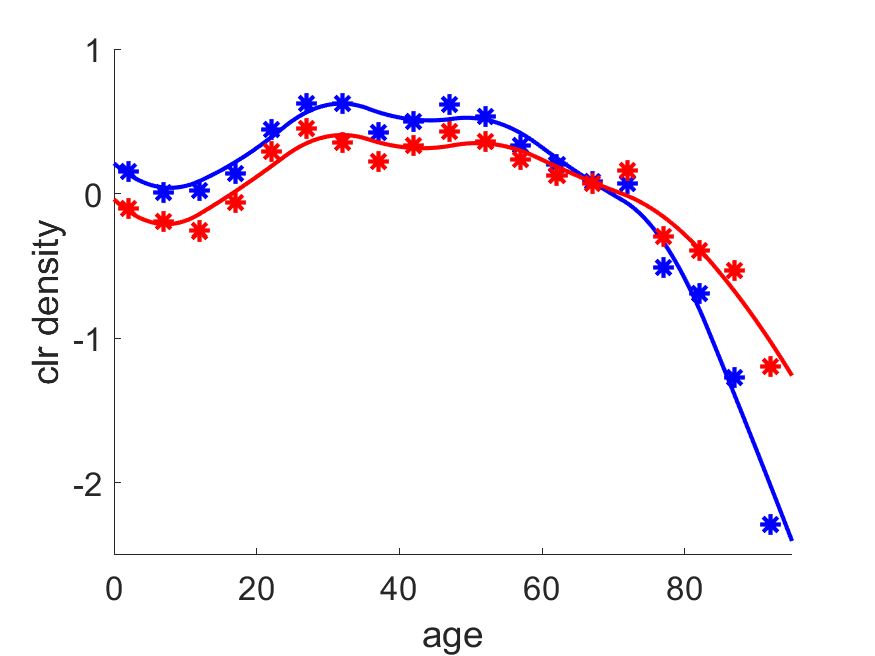}&
\includegraphics[width=0.5\textwidth]{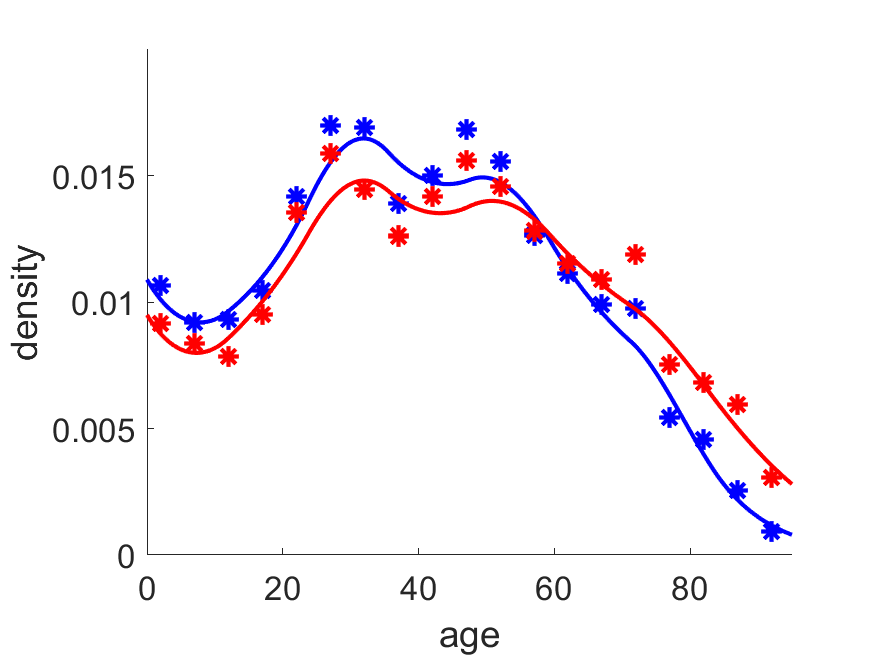}\\
\end{tabular}
\caption{Example of two approximated functional observations  in $L^2_{0}$ (left) and their $\mathcal{B}^{2}$ counterparts (right). 
}
\label{fig:example_1_JM}
\centering
\begin{tabular}{cc}
\includegraphics[width=0.5\textwidth]{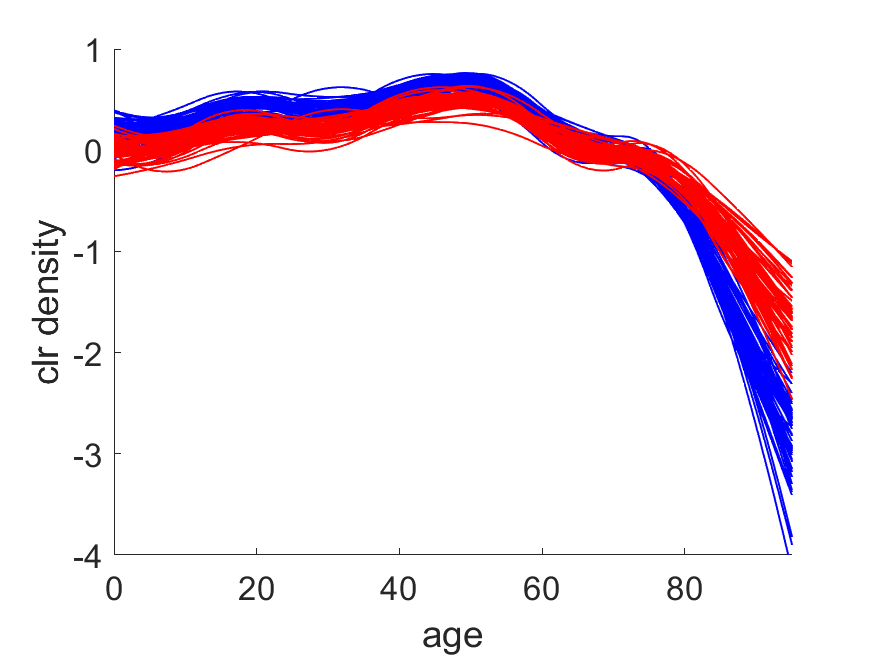}&
\includegraphics[width=0.5\textwidth]{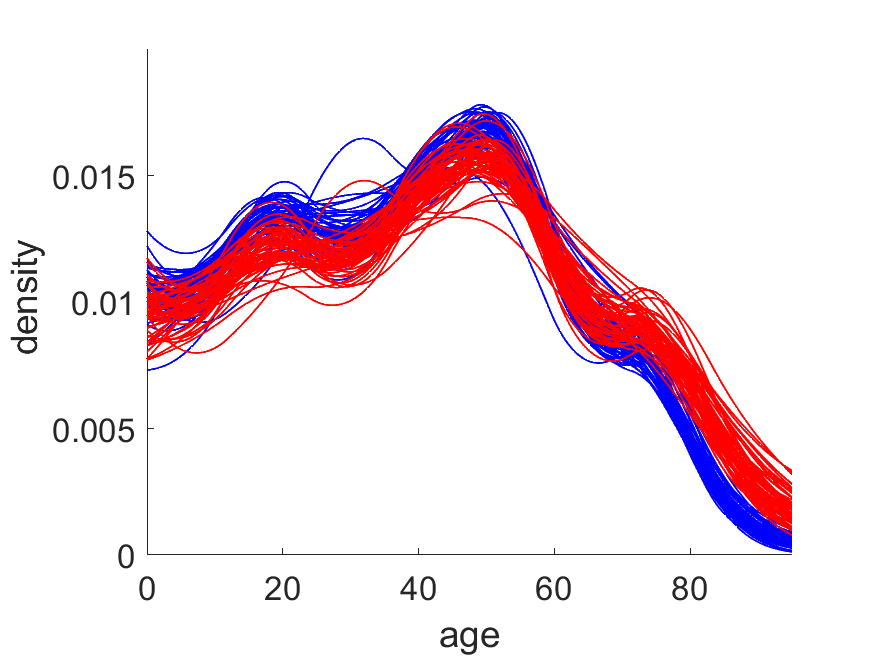}\\
\end{tabular}
\caption{Age distribution dataset - spline representation of the data in $L^2_{0}$ (left) and their $\mathcal{B}^{2}$ counterparts (right). 
}
\label{fig:example_JM}
\end{figure}

\subsubsection*{Spline representation of the main principal component in SFPCA}

To investigate the proposed approach for constructing the orthogonal basis within the real example setting, the impact of different factors was examined.
At first, the emphasis was given on the type of orthogonalization approach for $Z\!B$-spline basis and the effect of the individual basis functions on the principal functional components in SFPCA. Orthogonal spline basis obtained through the symmetric two-sided and both one-sided Gram-Schmidt orthogonalizations and $Z\!B$-splinets dyadic orthogonalization 
were all considered here. Furthermore, two equidistant sequences of knots of different lengths were chosen for further comparison:
\begin{align*}
&\Delta\lambda_{1} =\{\lambda^1_i\}_{i=0}^{8}, \quad \mbox{where} \quad \lambda_i^1 = 95i/8, \; i=0,\ldots,8,\\
&\Delta\lambda_{2} = \{\lambda^2_i\}_{i=0}^{20}, \quad  \mbox{where} \quad \lambda_i^2 = 95i/20, \; i=0,\ldots,20.
\end{align*}

\begin{figure}[ht!]
\centering
\begin{tabular}{cc}
\includegraphics[width=0.5\textwidth]{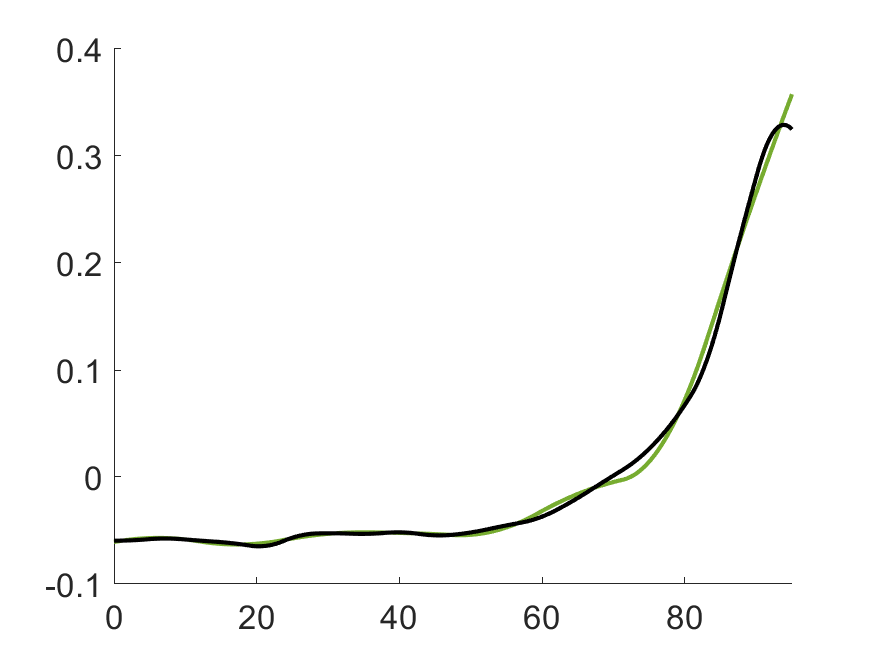}&
\includegraphics[width=0.5\textwidth]{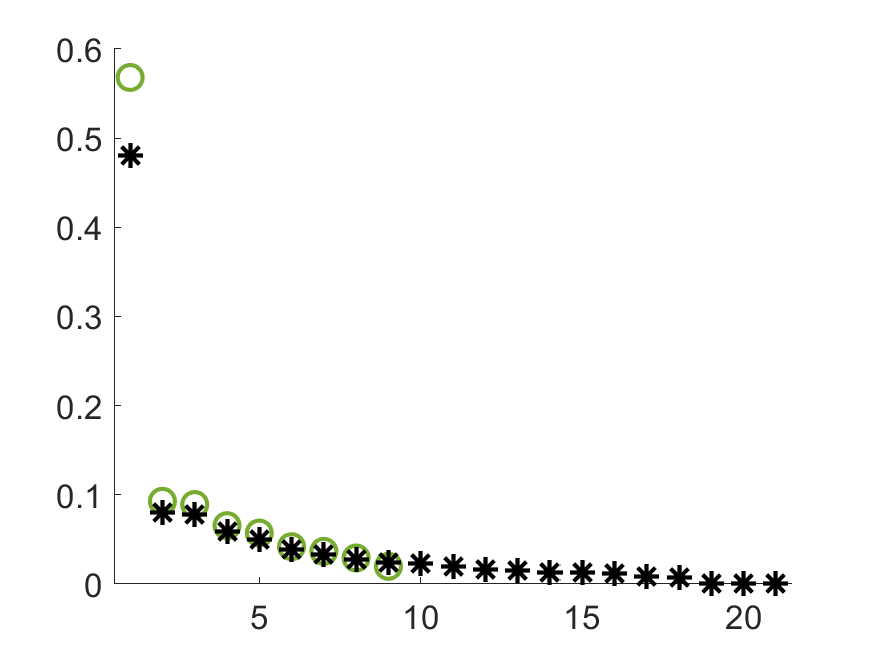}\\
\end{tabular}
\caption{The first principal component (left) and explained variability of eigenvalues (right) obtained through SFPCA for $\Delta\lambda_{1}$ (green) and $\Delta\lambda_{2}$ (black).}
\label{PCA_component1}
\end{figure}

The first principal component and the behaviour of all eigenvalues are shown in Figure \ref{PCA_component1} for both $\Delta \lambda_{1}$ and $\Delta \lambda_{2}$. The left figure confirms that for both choices of knots, the shape of the first functional principal component stays consistent. Note that the 
principal components do not depend on the
type of (orthogonal) basis, only on the chosen sequence of knots.
The screeplot of eigenvalues (Figure \ref{PCA_component1}, right) suggests the dominance of the first component as it is responsible for 56.8\% and 48.0\%  of the original variability contained in the data (where the explained variability is obtained as the ratio of the first eigenvalue to the sum of all eigenvalues) for $\Delta \lambda_1$ and $\Delta\lambda_2$, respectively.
Moreover, the shape of the first functional principal components also corresponds to the natural expectation about the main source of variability in the data, leading to a quite straightforward interpretation. Due to the densities being measured separately for men and women, one can expect the main mode of variability to be present in the higher age region while capturing both differences between men and women life expectancy and variability in life expectancy across regions.
This is confirmed by the larger deviation of the eigenfunction from zero in the age region from ca 75 to 95. 

Regardless of the type of orthogonal basis, we always get the same smoothing spline corresponding to the main principal component (eigenfunction), because it is determined uniquely. However, for the computations of its coefficients $\mathbf{o}^*$ from~(\ref{min}) we need to enumerate the matrix $\mathbf{N}_{kl}$ and the collocation matrix $\mathbf{O}_{k+1}(\mathbf{x})$. 
Here the effect of~the chosen basis is already visible in the numbers of non-zero elements that are given in Table \ref{nonzeroN}.  The computational efficiency of the $Z\!B$-splinet approach is noticeable for the second choice of knots $\Delta \lambda_{2}$ with nearly half of the non-zero elements of matrix $\mathbf{N}_{kl}$ and considerable lower number of non-zero elements of matrix $\mathbf{O}_{k+1}(\mathbf{x})$ compared to the one-sided GS approaches. $Z\!B$-splinet approach is comparable with two-sided GS approach for $\Delta \lambda_{1}$, but better in efficiency for $\Delta \lambda_{2}$. The sparsity patterns 
of the considered matrices  for the choice of knots $\Delta \lambda_{2}$ are plotted in Figure~\ref{fig:nonzero} for all four orthogonalization approaches.

\begin{table}[h!]
  \centering
  \begin{tabular}{c||c|c||c|c}
    & \multicolumn{2}{c||}{$\mathbf{N}_{kl}$} & \multicolumn{2}{c}{$\mathbf{O}_{k+1}(\mathbf{x})$} \\ 
    \hline
    \textbf{approach} & $\Delta \lambda_{1}$ & $\Delta \lambda_{2}$ & $\Delta \lambda_{1}$ & $\Delta \lambda_{2}$  \\ 
    \hline 
    \hline
    one-sided GS from left to right & 81 & 441 & 122 & 238\\ 
    \hline
    one-sided GS from right to left & 81 & 441 & 121 & 234\\ 
    \hline
    two-sided GS & 63 & 297 & 100 & 174\\ 
    \hline
    $Z B$-splinet & 63 & 243 & 114 & 170
  \end{tabular}
  \caption{Number of non-zero elements of matrices $\mathbf{N}_{kl}$ and $\mathbf{O}_{k+1}(\mathbf{x})$ for both choices of knots $\Delta \lambda_{1}$ and $\Delta \lambda_{2}$.}
  \label{nonzeroN}
\end{table}

\begin{figure}[ht!]
\centering
\begin{tabular}{cccc}
\includegraphics[width=0.22\textwidth]{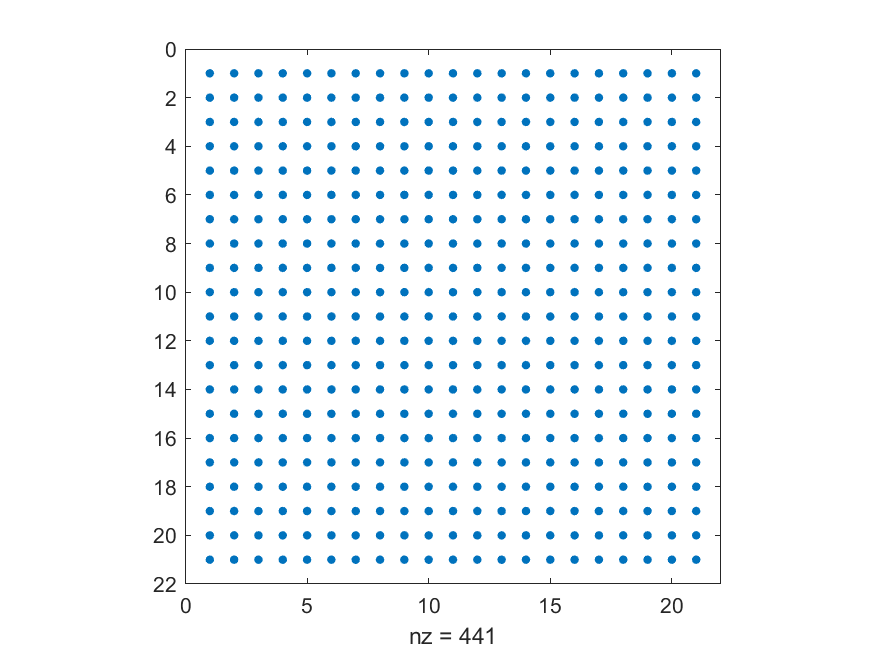} &
\includegraphics[width=0.22\textwidth]{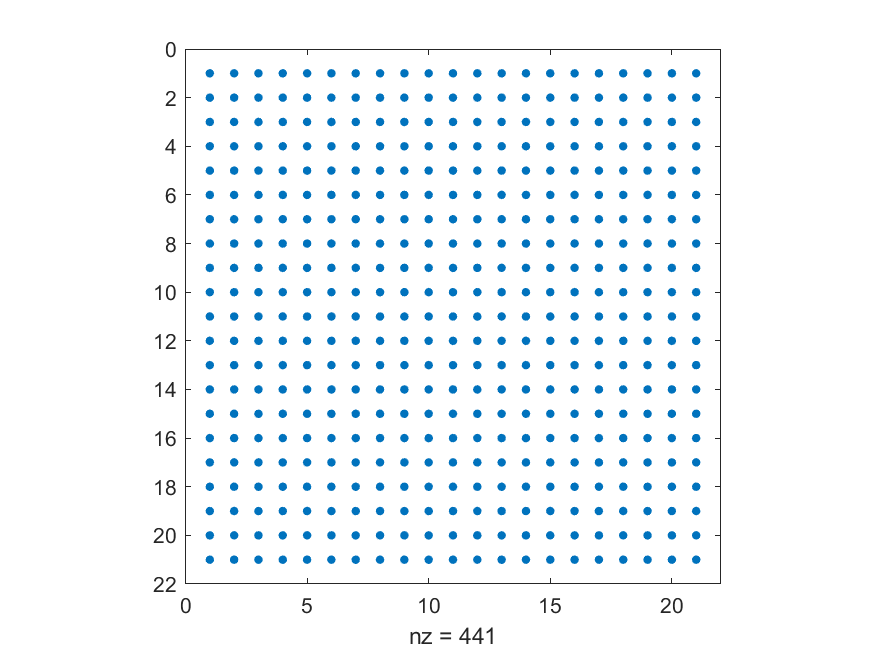} &
\includegraphics[width=0.22\textwidth]{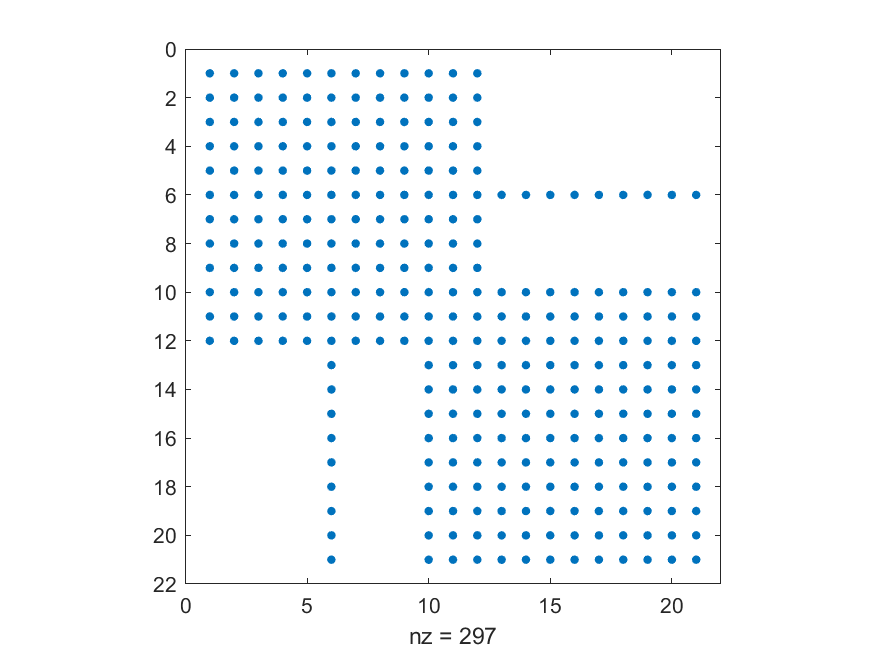} &
\includegraphics[width=0.22\textwidth]{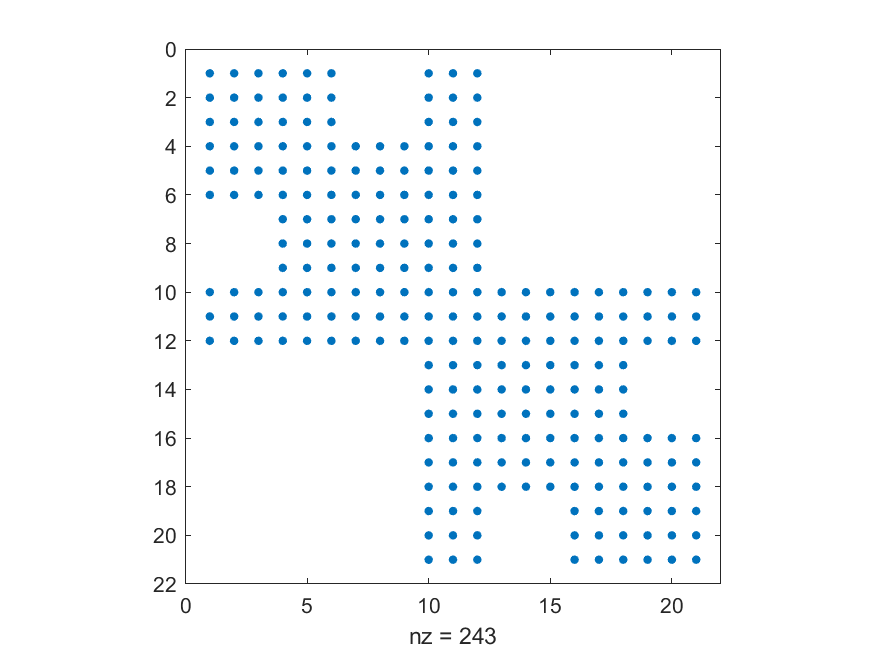}\\
\end{tabular}
\begin{tabular}{cccc}
\includegraphics[width=0.22\textwidth]{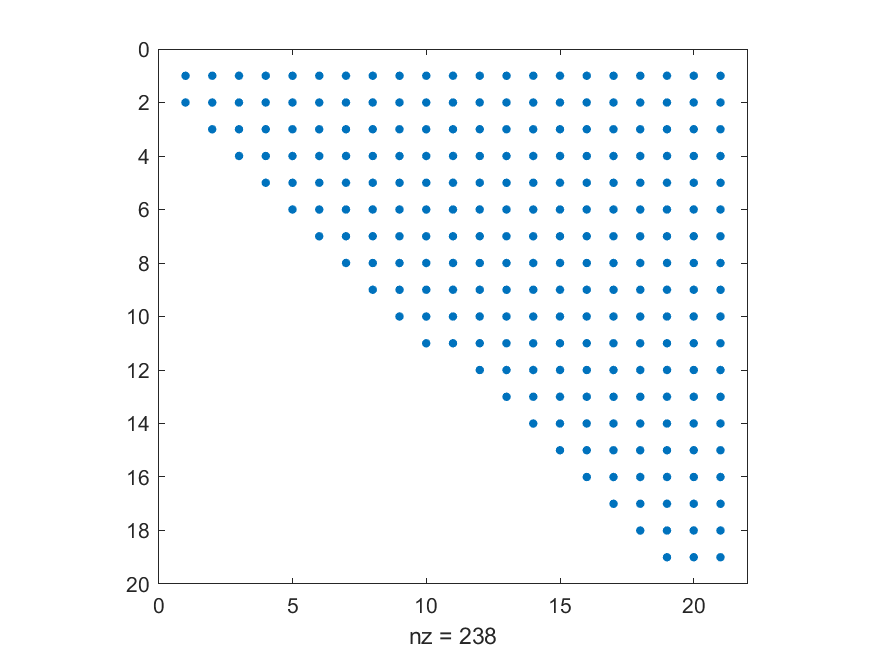} &
\includegraphics[width=0.22\textwidth]{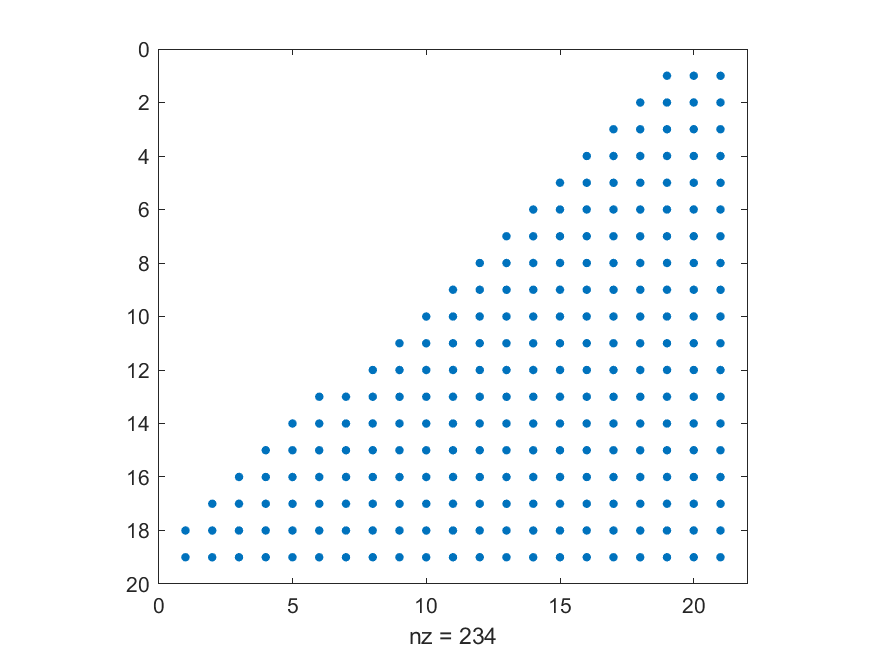} &
\includegraphics[width=0.22\textwidth]{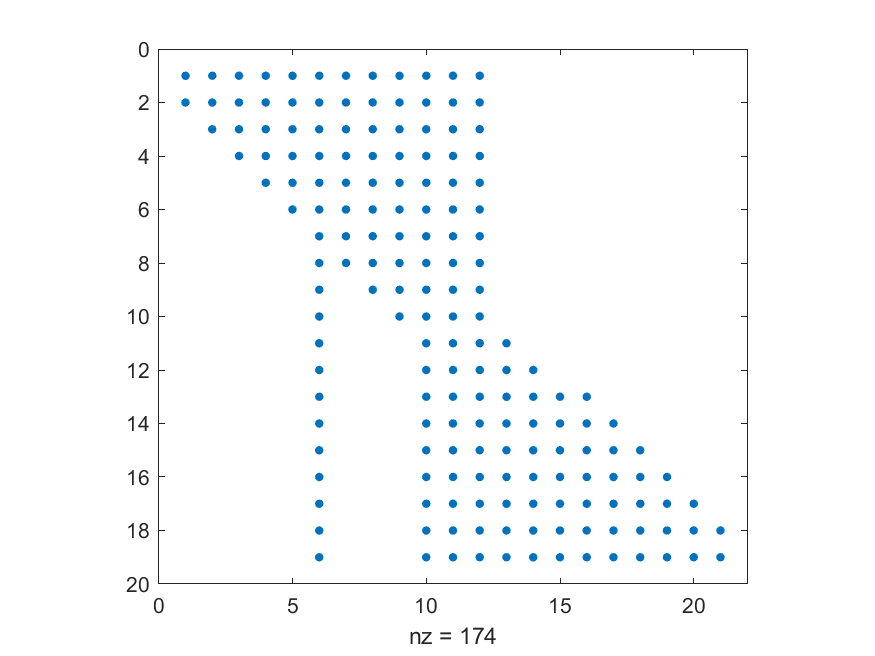} &
\includegraphics[width=0.22\textwidth]{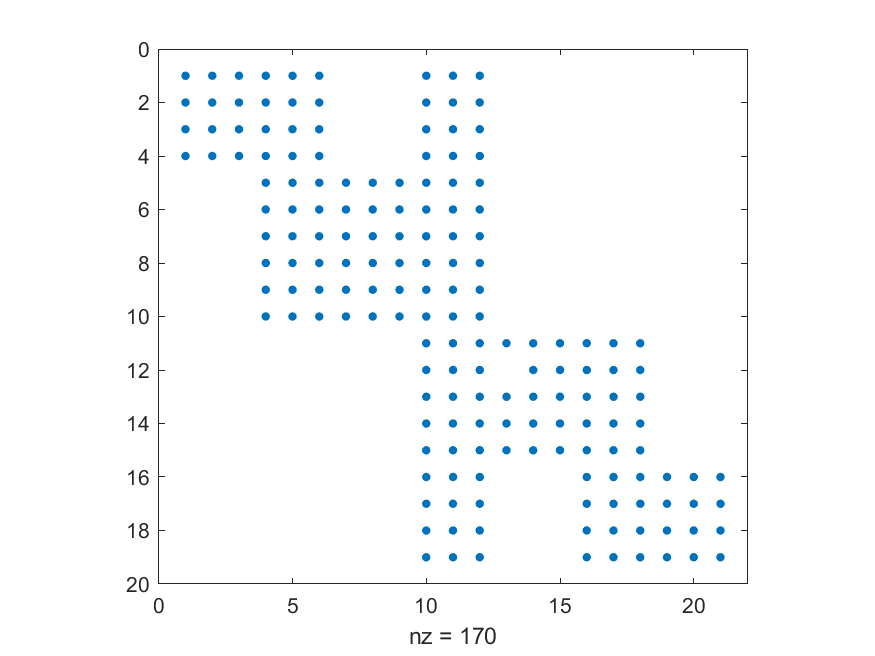}\\
\end{tabular}
\caption{Non-zero elements of matrices $\mathbf{N}_{kl}$ (first row) and $\mathbf{O}_{k+1}(\mathbf{x})$ (second row) for the choice of knots $\Delta \lambda_{2}$.
}
\label{fig:nonzero}
\end{figure}

\subsubsection*{Importance of basis functions in SFPCA for different 
orthogonalization approaches}
The focus of this subsection is on the contribution of individual basis functions constructed through different orthogonal approaches to the final form of the first principal component in SFPCA. 
The importance of basis functions corresponds to the absolute value of their respective coefficient, an element of the eigenvector of the sample covariance matrix of spline coefficients. 
Naturally, the higher the absolute value of the coefficient, the larger the effect of the corresponding basis function on the final estimate of the functional principal component. 

The idea of this exploratory procedure is to reduce the number of basis functions used in the expansion and remove those that do not contribute significantly to the first principal component. 
The rule to select a basis function for the approximation is set based on the absolute value of its coefficients to be greater than 0.1 (which is a default choice, e.g. when working with package \textit{stats} in statistical software R \cite{rcore21}). We mark the selected basis functions as active with respect to the first principal component.
The number of active basis functions indicates how well the spline basis is able to characterize the eigenfunction.  The lower number of active basis functions suggests that the basis is well-suited for the expansion and can capture the source of data variability by fewer elements.

The active basis functions for different orthogonal approaches are displayed in Figure \ref{PC1_significance_JM} for both choices of knots $\Delta \lambda_{1}$ (left) and $\Delta \lambda_{2}$ (right). 
The corresponding numbers of active basis functions for each approach are displayed in Table \ref{activ}. 
In all cases, some inactive basis functions were present as the total number of basis functions is 9 for $\Delta \lambda_{1}$ and 21 for $\Delta \lambda_{2}$. 
Since the goal is to decrease the number of active basis functions, 
the one-sided Gram-Schmidt orthogonalization from right to left is not convenient 
as the number 
of used active basis functions is the highest, leading to a more complex structure. The difference among the other orthogonalization approaches is more evident for the sequence of more knots where the spline basis is composed of more elements.  
For  $\Delta \lambda_{2}$, the number of active basis functions decreases from 21 to 7 for one-sided Gram-Schmidt approach from left to right and for $Z\!B$-splinets. 

\begin{table}[h!]
  \centering
  \begin{tabular}{c||c|c}
    \textbf{approach} & $\Delta \lambda_{1}$ &$\Delta \lambda_{2}$ \\ 
    \hline 
    \hline
    one-sided GS from left to right & 4 & 7 \\ 
    \hline
    one-sided GS from right to left & 8 & 18\\ 
    \hline
    two-sided GS & 5 & 10 \\ 
    \hline
    $Z\!B$-splinet & 5 & 7 
  \end{tabular}
  \caption{Summary of the number of active basis functions for the first principal component for $\Delta \lambda_{1}$ and $\Delta \lambda_{2}$.}
  \label{activ}
\end{table}

\begin{figure}[ht!]
\centering
\begin{tabular}{cc}
\includegraphics[width=0.4\textwidth]{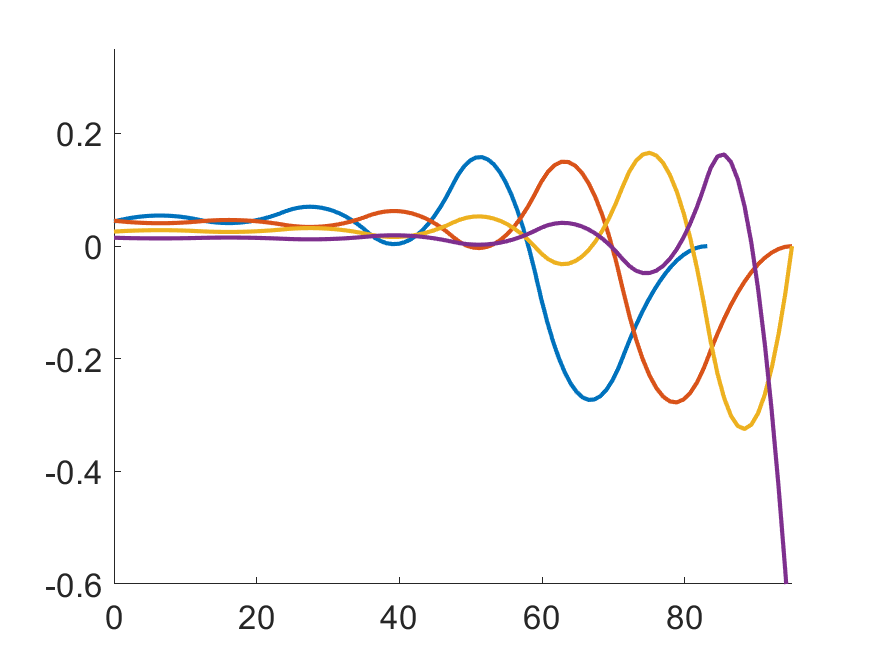}&
\includegraphics[width=0.4\textwidth]{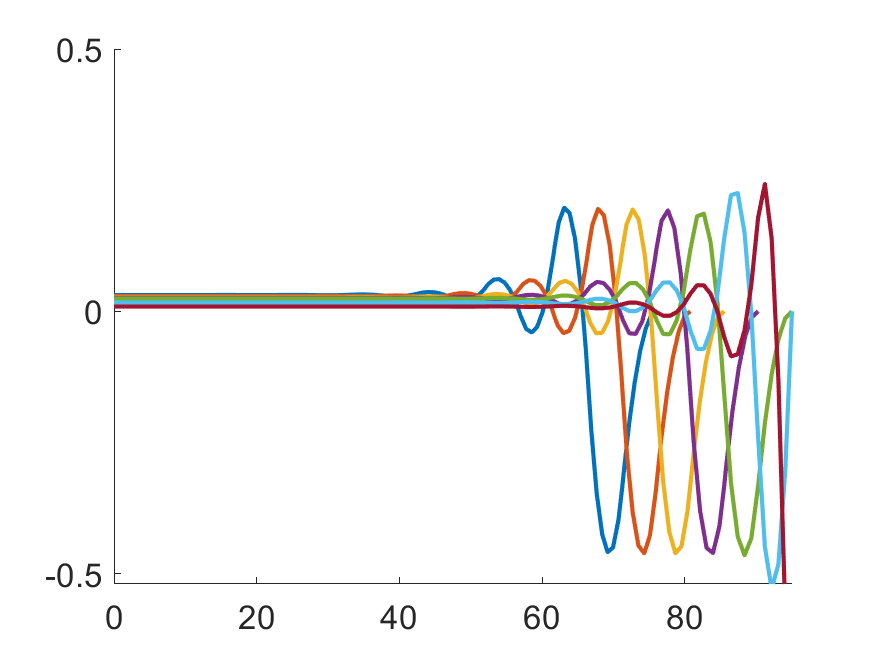}\\
\includegraphics[width=0.4\textwidth]{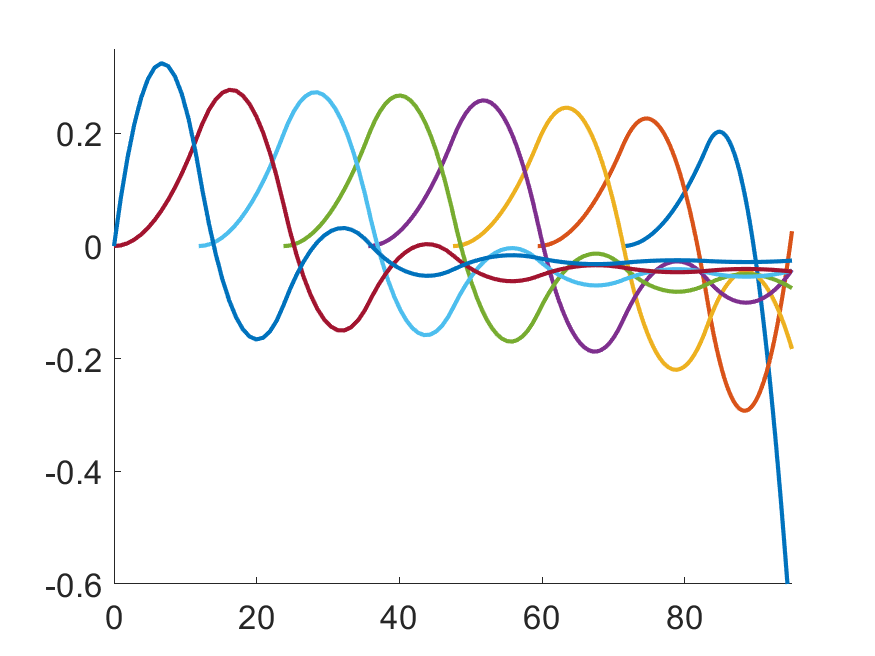}&
\includegraphics[width=0.4\textwidth]{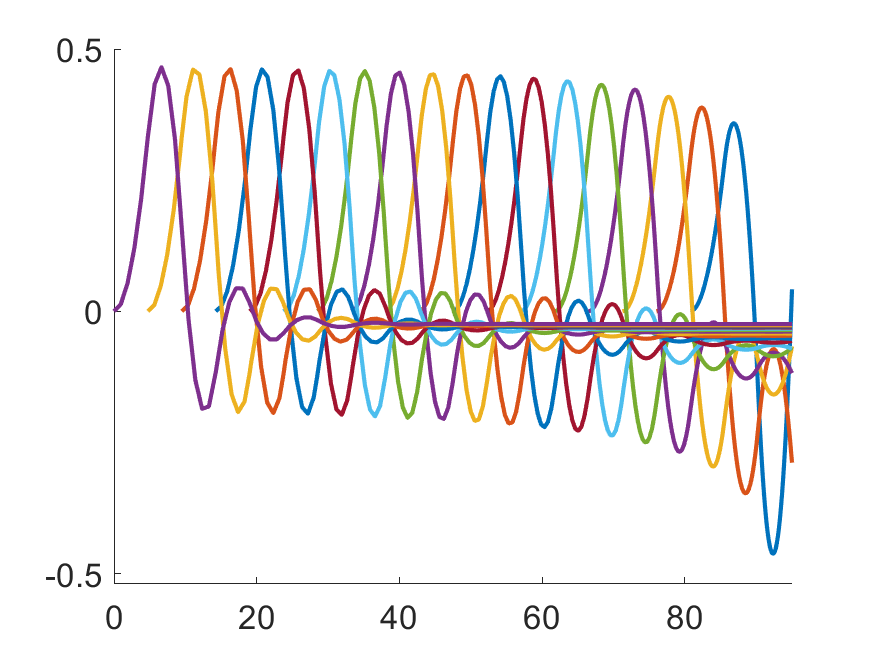}\\
\includegraphics[width=0.4\textwidth]{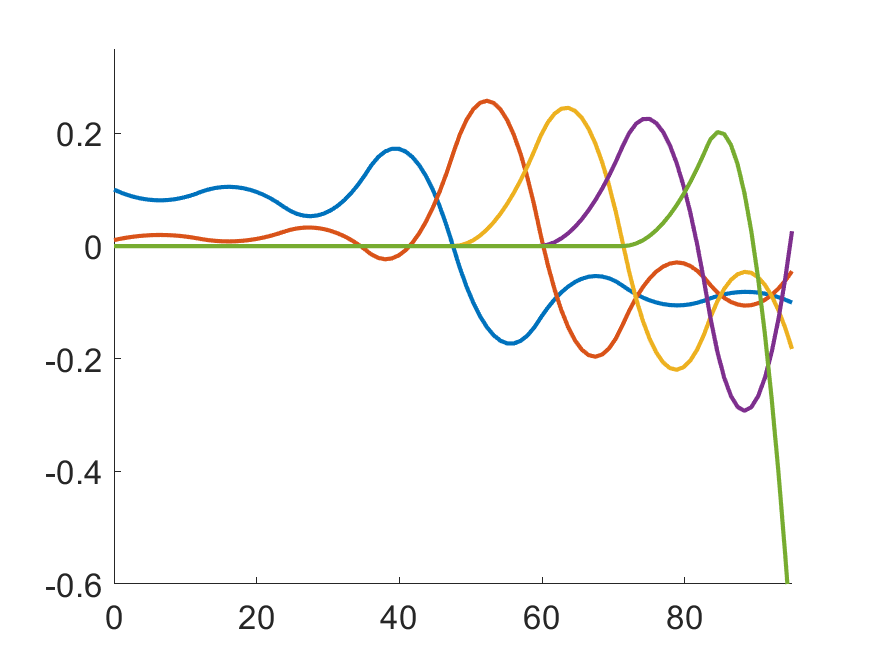}&
\includegraphics[width=0.4\textwidth]{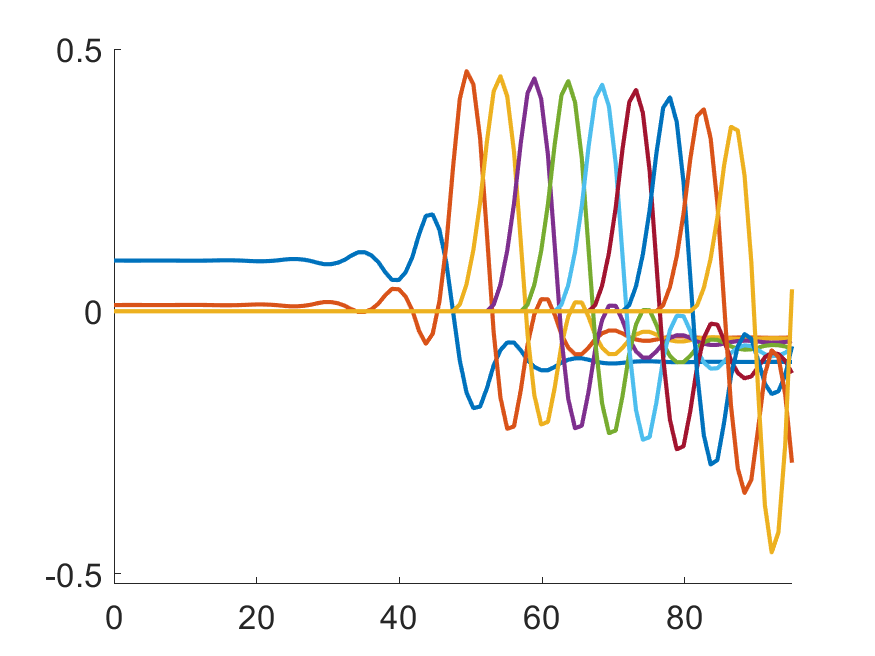}\\
\includegraphics[width=0.4\textwidth]{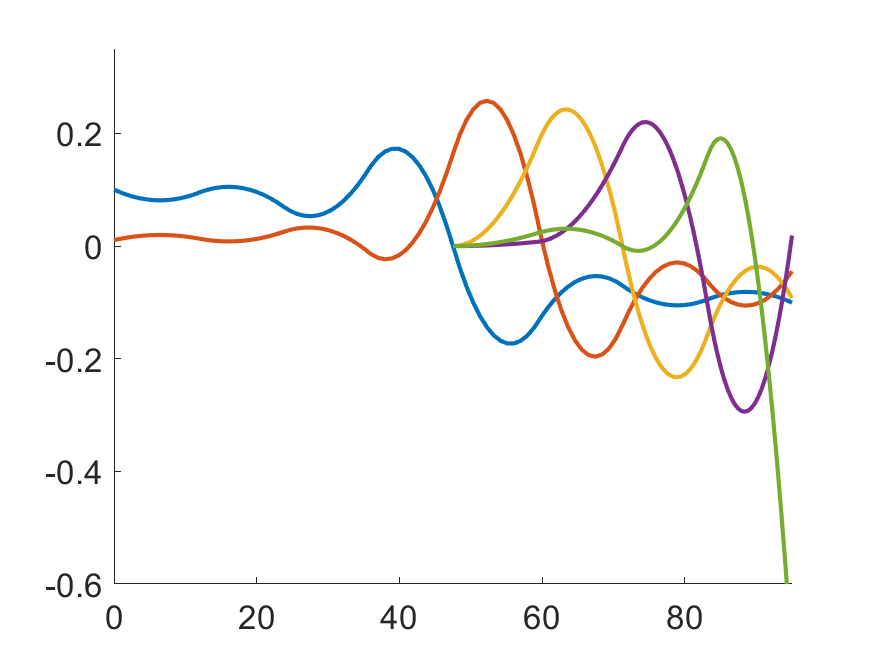}&
\includegraphics[width=0.4\textwidth]{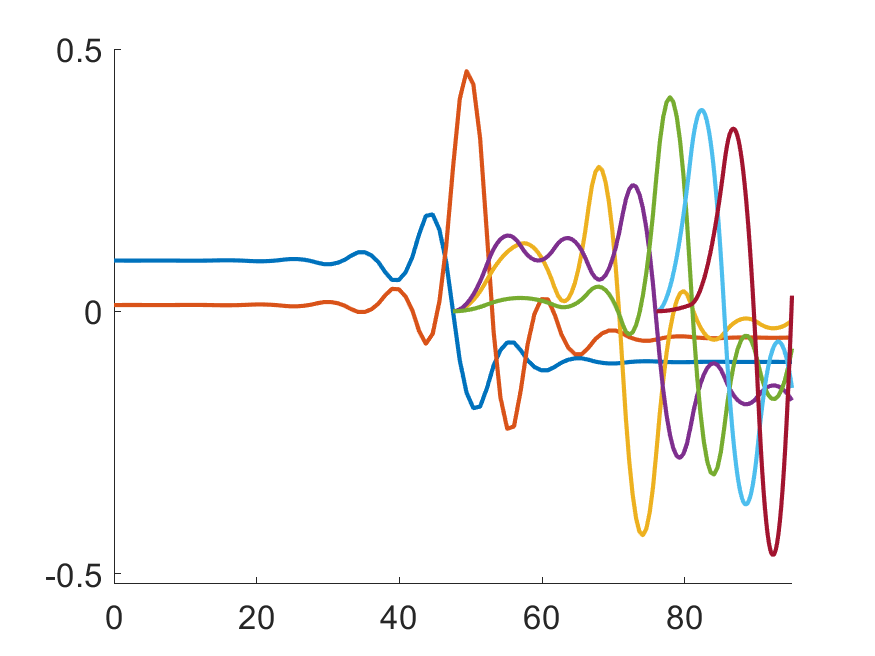}\\
\end{tabular}
\caption{The functional bases used for SFPCA with $\Delta \lambda_{1}$ (left) and $\Delta \lambda_{2}$ (right).  The first row corresponds to  one-sided GS from left to right,  the second to one-sided GS from right to left,  the third to two-sided GS and the last one to $Z\!B$-splinets.}
\label{PC1_significance_JM}
\end{figure}

In order to explain the dominance of the one-sides Gram-Schmidt approach from left to right over its opposite counterpart from right to left, we can argue as follows. The observed concentration of variability in the higher age group implies a need for more significant basis functions carrying this local information, i.e. functions having large variations in the higher age group interval while having small variations in the lower age group interval. Since this is exactly the case for the one-sided Gram-Schmidt approach from left to right, this orthogonal basis is well-suited for the considered dataset. The opposite would be true when the variability in the data would be concentrated in the lower age group. Therefore, a versatile orthogonalization approach for a general case, such as two-sided Gram-Schmidt or $Z\!B$-splinet approach, is required. 
When comparing the two-sided Gram-Schmidt orthogonalization approach and $Z\!B$-splinets, there is not a big difference in the number of active basis functions. However, the $Z\!B$-splinet approach gains in efficiency due to the local support of the corresponding basis functions. Moreover, the active basis functions for $Z\!B$-splinet approach follow the variability in the higher age group interval. Only those are selected that have support in this interval, as observed from the dyadic structure in Figure \ref{pyramids}.

\begin{figure}[ht!]
\centering
\begin{tabular}{cc}
\includegraphics[width=6.5cm, height=9cm]
{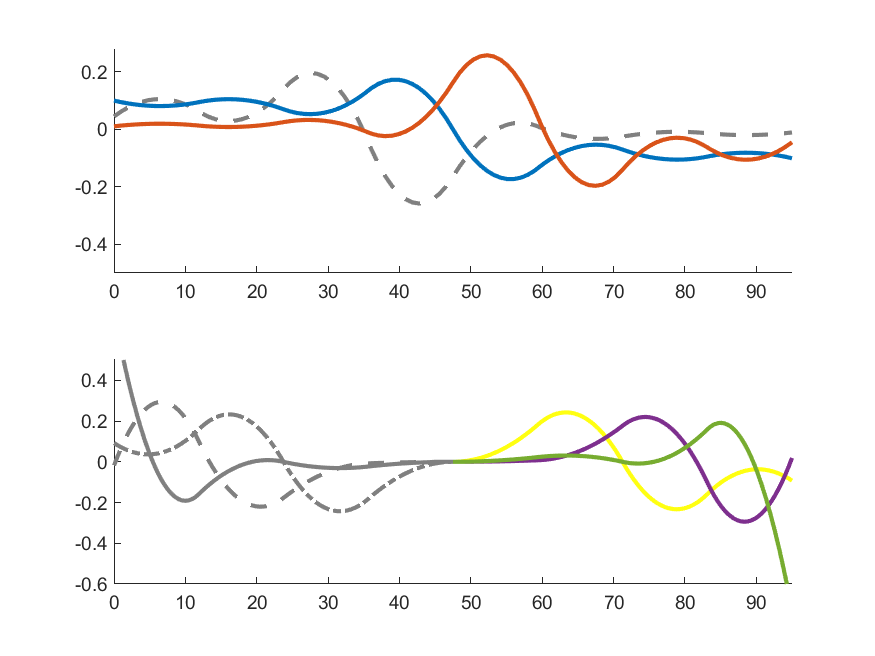} &
\includegraphics[width=6.5cm, height=9cm]
{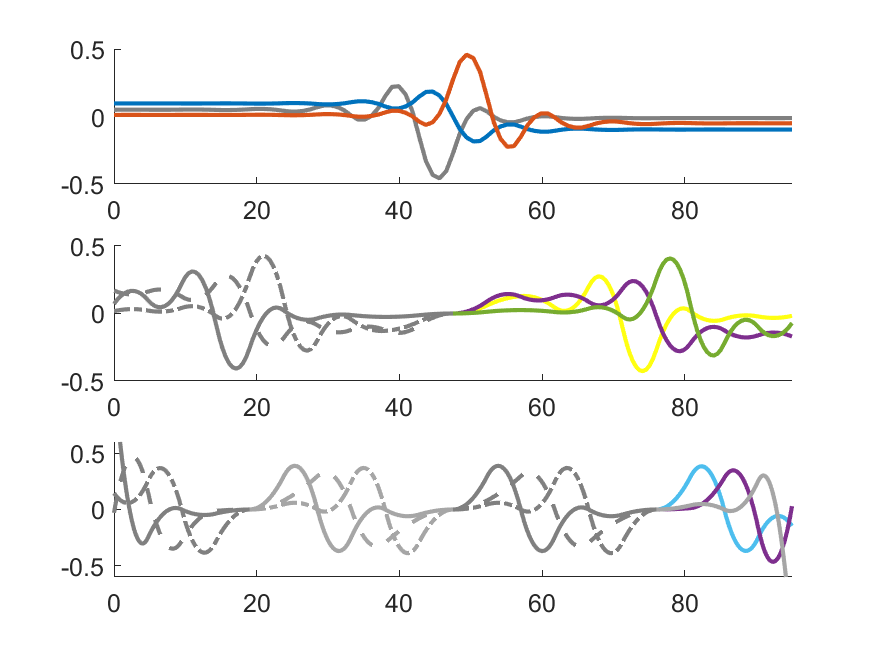}
\end{tabular}
\caption{Dyadic structure of active basis functions for $Z\!B$-splinets constructed on $\Delta \lambda_{1}$ (left) and $\Delta \lambda_{2}$ (right).
}
\label{pyramids}
\end{figure}

For comparison of the resulting active basis expansions and also to admit possible limitations of our approach, we provide the corresponding approximations of the first principal component, see Figure \ref{appr_FPC}. Clearly, the approximation power of the reduced expansion declines with a decreasing number of active basis functions. Therefore, we cannot expect to profit from the significant reduction and improve the accuracy of the approximation simultaneously. However, the relative errors measured in $L_2$ norm remain within a reasonable range for all orthogonalization approaches, see Table \ref{errortab}. Although the relative error for the $Z\!B$-splinet approach is the highest observed, we must note that it mainly corresponds to the low number of active basis functions; hereby, the one-sided GS from left to right clearly benefits from specific data structure, which would however hardly be so in the general case. The latter one-sided GS approach from right to left suffers from overparametrization (cf. Table \ref{activ}). Thus, the real competitor here is two-sided GS, which better approximates the first principal component at the cost of a higher number of active basis functions.

\begin{figure}[ht!]
\centering
\begin{tabular}{cc}
\includegraphics[width=0.4\textwidth]{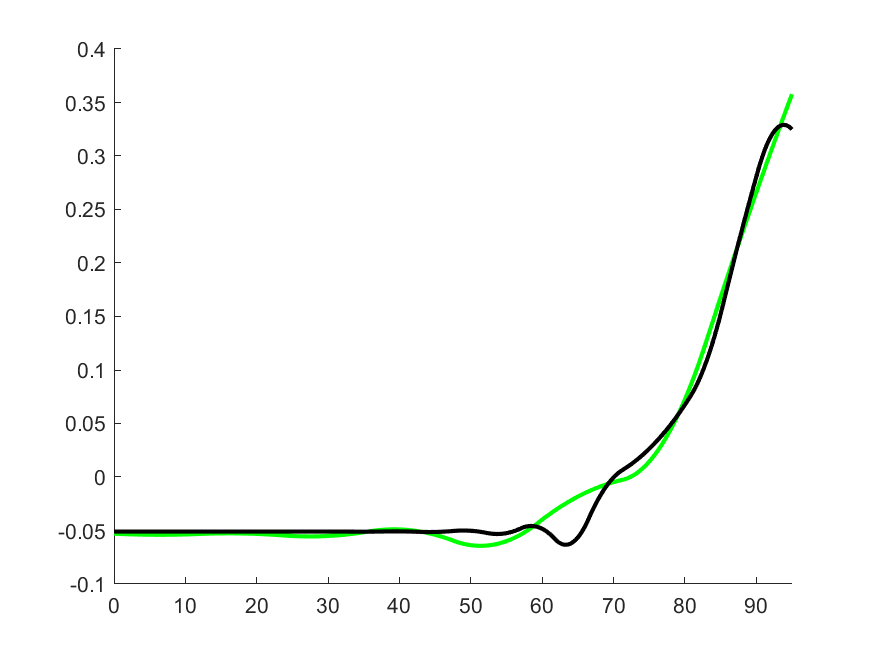}&
\includegraphics[width=0.4\textwidth]{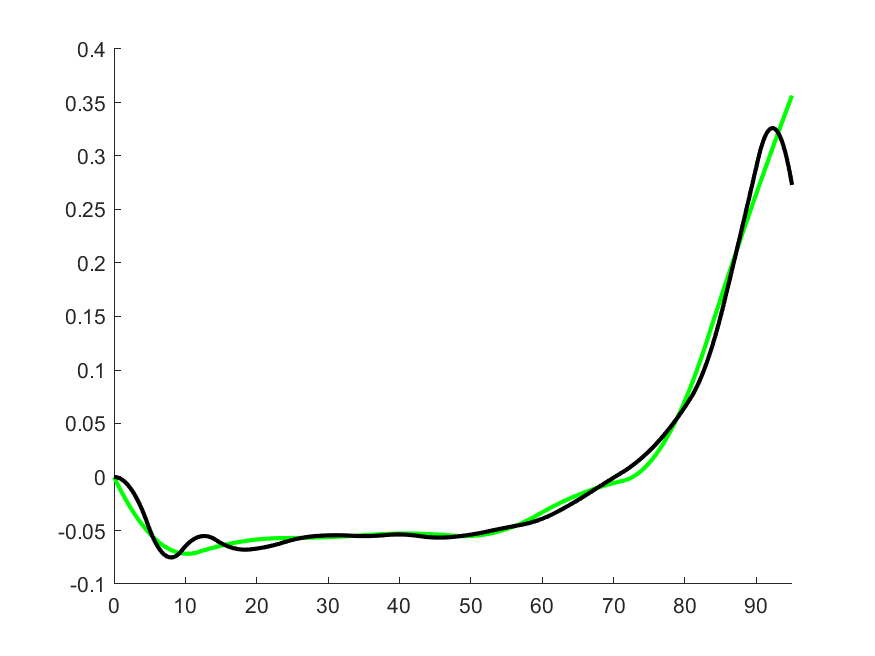}\\
\includegraphics[width=0.4\textwidth]{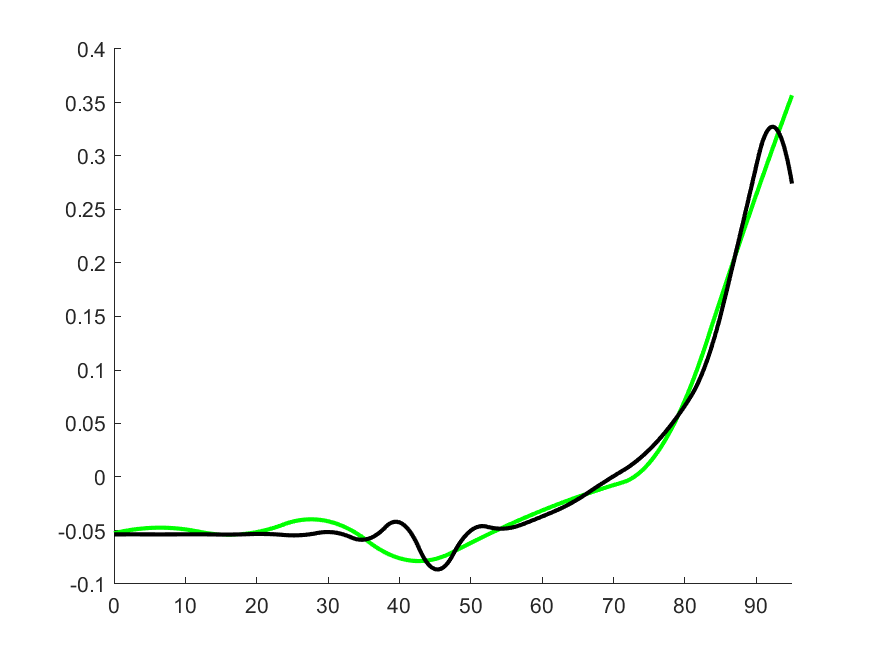}&
\includegraphics[width=0.4\textwidth]{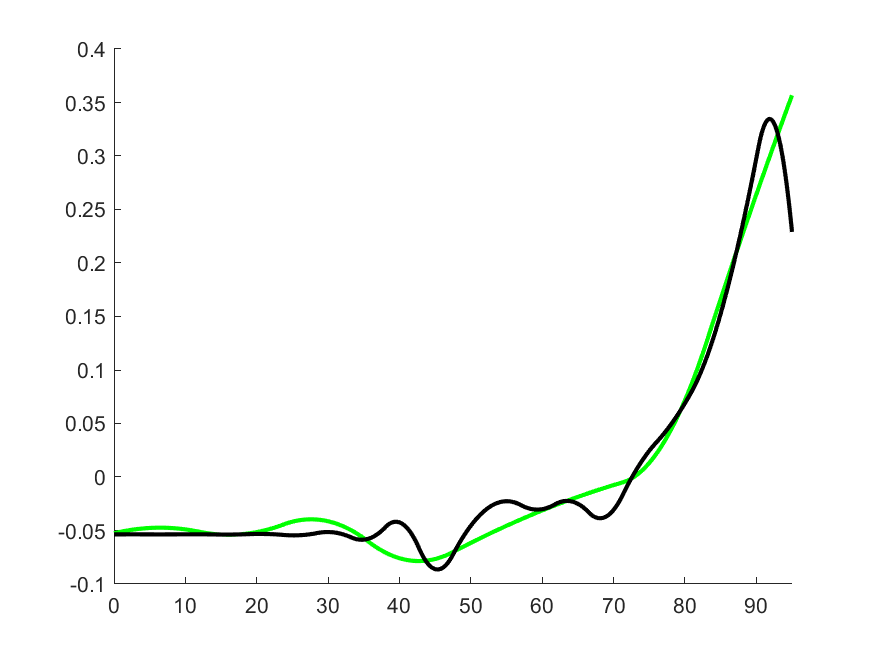}
\end{tabular}
\caption{Approximation of the main principal component by one-sided GS from left to right (top left), one-sided GS from right to left (top right), two-sided GS (bottom left) and \mbox{$Z\!B$-splinets} (bottom right)  for $\Delta\lambda_{1}$ (green) and $\Delta\lambda_{2}$ (black).}
\label{appr_FPC}
\end{figure}

Rather than analysing the  $L_2$-error and its sources, based on the default criterion of active basis functions, we prefer to introduce a more sophisticated tool to reduce the basis expansion and to compare the suitability of each orthogonalization approach in terms of explained variability in the next section.

\begin{table}[h!]
  \centering
  \begin{tabular}{c||c|c}
    \textbf{approach} & $\Delta \lambda_{1}$ &$\Delta \lambda_{2}$ \\ 
    \hline 
    \hline
    one-sided GS from left to right & 0.0542 & 0.0982 \\ 
    \hline
    one-sided GS from right to left & 0.0818 & 0.1158\\ 
    \hline
    two-sided GS & 0.1001 & 0.0854 \\ 
    \hline
    $Z\!B$-splinets & 0.1001 & 0.1424 
  \end{tabular}
  \caption{Relative errors of approximated first principal component (Fig. \ref{appr_FPC})
  }
  \label{errortab}
\end{table}

\subsubsection*{Effect of sparsity on explained variability}
Comparison of functions, resulting from
aggregation of selected (active) basis functions, gives us a first insight on quality of approximation of the first functional principal component. 
However, their effect can be explored in a different sense by using sparse principal component analysis (SPCA) \cite{erichson18}, applied in the case of an orthogonal spline basis simply to spline coefficients. Accordingly, 
the principal components are estimated using a lower number of 
basis functions by suppressing the redundant ones. 
These components then can be, in the above sense, considered as sparse (or reduced) principal components (sparse FPCA is actually defined a bit differently \cite{NIE2020}). Generally, the higher the sparsity parameter in SPCA, 
the lower number of active basis functions is used. The price to pay for the simpler structure of the principal component is the natural decline of the explained variability of the given component. Here, SPCA is used as a more sophisticated tool for showing the significance of the effect of the basis functions on the principal components, resp. how well the spline basis is able to capture the specific functional data structure.  
To see the effect of using a sparse approach, 
the results were investigated for different values of sparsity parameter on the interval $\left[ 0,1\right]$. 
The evolvement of the number of active basis functions is depicted in Figure \ref{nonzeros}. From this, one can see a~relatively fast decline in the number of active basis functions for all approaches, meaning that the full structure of the first functional principal component (FPC), i.e. eigenfunction,  is soon replaced by its estimate using a significantly lower number of basis functions. While for $\Delta \lambda_{1}$, the one-sided Gram-Schmidt approach from left to right holds up the longest with at least one active basis function (i.e. before the method breaks down) compared to the other approaches, for $\Delta \lambda_{2}$ both $Z\!B$-splinets and one-sided Gram-Schmidt from left to right offer comparable results as the most stable approaches in the sense of resistance to sparsity. 

\begin{figure}[ht!]
\centering
\includegraphics[width=1.0\textwidth,trim={0cm 1.5cm 0cm 0.5cm},clip]%
{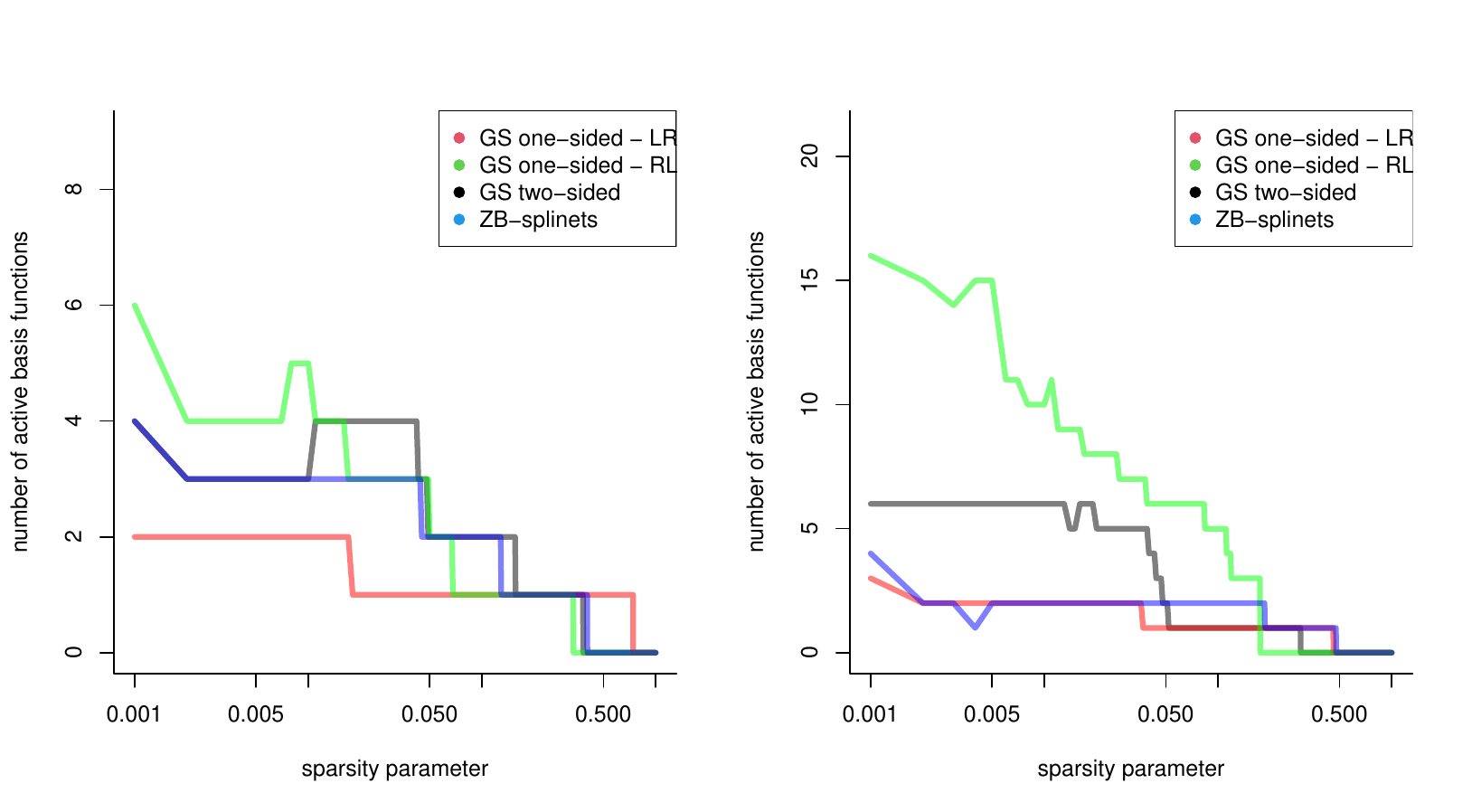}
\caption{The number of active basis functions forming the first principal component for $\Delta \lambda_{1}$ and $\Delta \lambda_{2}$ with respect to different values of sparsity parameter depicted for different orthogonalization approaches.
}
\label{nonzeros}
\end{figure}

Figure \ref{FVE} offers further insight and comparison of the methods as it shows the proportion of the overall variability, explained by the first sparse principal component, depending on the choice of the sparsity parameter. Here, the sparsity parameter is restricted to the values where all approaches lead to a nonzero first principal component.
It is interesting here to see that, while for the first FPC one-sided Gram-Schmidt orthogonalization from left to right dominates significantly with its maintained variability for $\Delta\lambda_{1}$, both two-sided Gram-Schmidt and especially $Z\!B$-splinets offer comparable results for $\Delta\lambda_{2}$. 
We note that the dominance of one-sided GS from left to right, especially for the choice of $\Delta\lambda_{1}$, is not surprising since the nature of original data corresponds to the nature of this orthogonalization approach as observed also earlier. However, the $Z\!B$-splinet approach, that gives similar results,  is a more versatile tool for general data.

\begin{figure}[ht!]
\centering
\includegraphics[width=1.0\textwidth,trim={0cm 1.5cm 0cm 0.5cm},clip]
{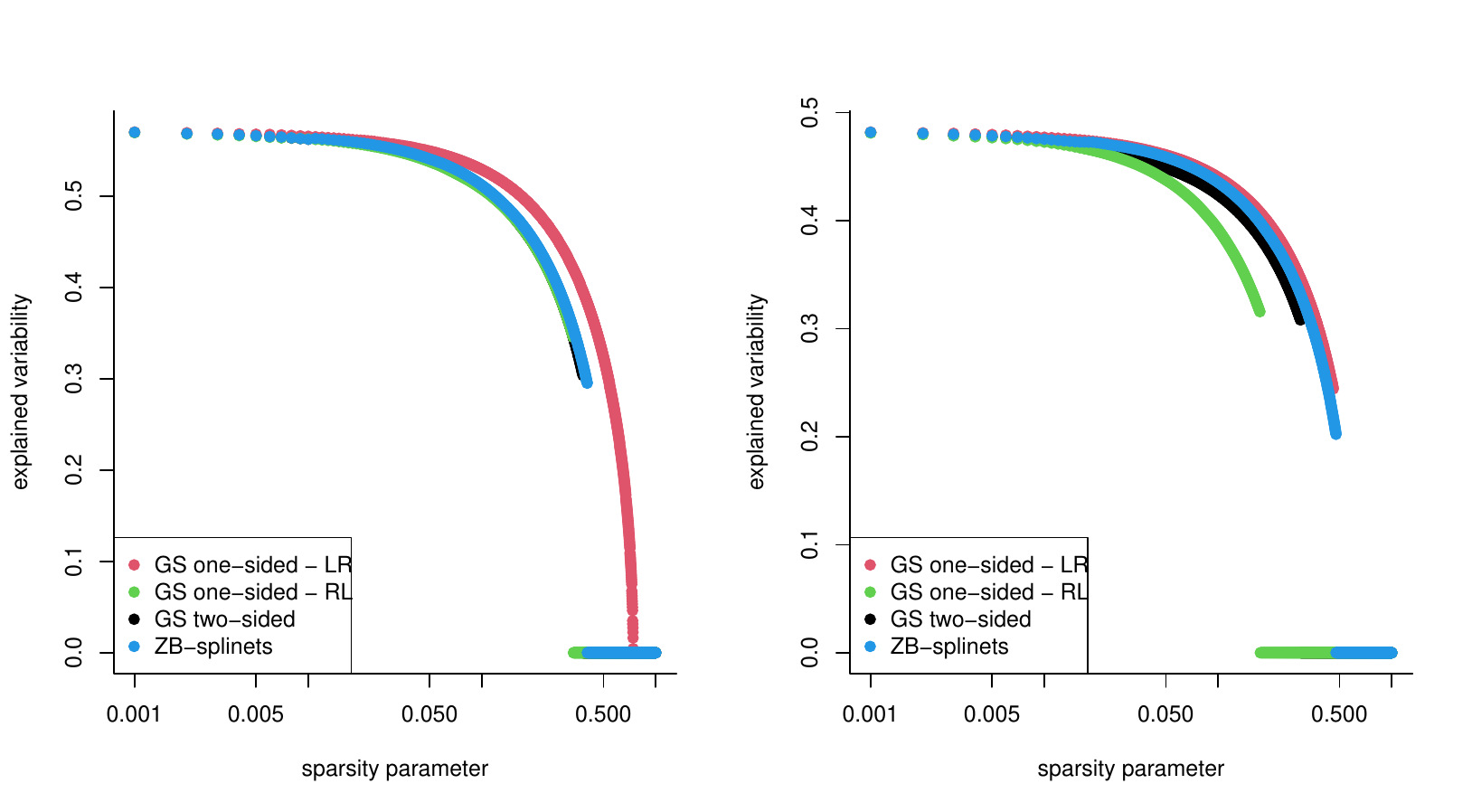}
\caption{Fraction of explained variability by the first principal component for different orthogonalization approaches.}
\label{FVE}
\end{figure}

\subsection{Results and discussion}
 In this application part, four approaches for construction of the orthogonal spline basis were compared from several points of view. 
 We conclude that both the $Z\!B$-splinets and one-sided Gram-Schmidt from left to right have individual basis functions that better correspond to the first FPC than the other approaches. This means that a lower number of (active) basis functions is needed for a relatively accurate description of the component. This phenomenon is even more prominent in the case with higher number of knots ($\Delta \lambda_{2}$). While the left-to-right Gram-Schmidt clearly takes advantage from the shape of the first component, i.e. varied behavior on the right side of the overall support, $Z\!B$-splinets utilize the basis functions with a local support within the same area. As there are more locally defined basis functions with increasing number of knots, it can be expected that the latter approach should prevail even more in such setting. Furthermore, the existence of locally defined basis functions on the whole domain for functional data ensures that the efficiency of $Z\!B$-splinets does not depend on the shape of the (first) principal component.
The dominance of the left-to-right Gram-Schmidt and $Z\!B$-splinets in this example is further emphasized 
as with the increasing value of the sparsity parameter both approaches endure the longest without breaking down and therefore keeping a nonzero amount of explained variability within the first FPC. 
While the other aspects put the two discussed approaches side by side, the indisputable advantage of the $Z\!B$-splinets is in its computational efficiency.  
Although, clearly, we present just one concrete empiric example here, this was carefully chosen to clearly demonstrate both advances and possible limitations of the $Z\!B$-splinets approach.

\section{Conclusions}
We have introduced a~new orthogonal spline basis for the functional representation of probability density functions. Due to their specific properties of scale invariance and relative scale, probability density functions need a suitable functional representation that respects their relative nature. This is achieved with the Bayes spaces methodology which also enables to convert processing of PDFs to the standard space of square-integrable Lebesgue functions by the centered log-ratio transformation. Since this transformation induces a zero integral constraint, a~new $Z\!B$-spline basis reflecting this condition was recently developed for a suitable basis expansion of probability density functions. In practical applications such as PCA, the orthogonality of the basis plays a crucial role in the proper function representation and interpretability of the results. 

The construction of an orthogonal basis for the representation of probability density functions was addressed in this paper and an effective approach for orthogonalization of $Z\!B$-splines was introduced. The proposed $Z\!B$-splinet approach has shown to maintain several advantages. Above all, $Z\!B$-splinets were proved to be beneficial in terms of
\begin{enumerate}
    \item computational efficiency,
    \item locality of spline supports.
\end{enumerate}
In particular, the computational efficiency was measured by the number of evaluations of inner products needed for the basis orthogonalization. The locality was defined as small relative total support. 
This enables  $Z\!B$-splinets to have the potential for flexible adaptation for functions with local characteristics.
These aspects of $Z\!B$-splinets were compared to one-sided Gram-Schmidt (from left to right and from right to left) and with a two-sided Gram-Schmidt orthogonalization approach. $Z\!B$-splinet approach has shown to surpass the others in both criteria: having the lowest number of inner product evaluations and the smallest relative total support.
The proposed approach was demonstrated on an empirical demographic example using functional PCA as one of statistical tools where orthogonality of the basis is essential. Focusing on the first principal component, the presented results show that $Z\!B$-splinets offer comparable results concerning the preservation of data variability while maintaining both advantages of flexibility and computational efficiency.
To enhance the use of $Z\!B$-splinets among broader audience, the plan is to prepare a comprehensive R package devoted to the construction of $Z\!B$-splinets in the context of functional data analysis.

\label{conclusions}

\section*{Acknowledgements}
{\bf Funding:}\\
We gratefully acknowledge the support of this research and researchers by the following grants: IGA PrF 2024 006 Mathematical models and the Czech Science Foundation grant 22-15684L Generalized relative data and robustness in Bayes spaces.


\bibliography{references}

\begin{thebibliography}{31}
\providecommand{\natexlab}[1]{#1}
\providecommand{\url}[1]{\texttt{#1}}
\expandafter\ifx\csname urlstyle\endcsname\relax
  \providecommand{\doi}[1]{doi: #1}\else
  \providecommand{\doi}{doi: \begingroup \urlstyle{rm}\Url}\fi

\bibitem[Alavi and Aminikhah(2021)]{alavi}
J.~Alavi and H.~Aminikhah.
\newblock Orthogonal cubic spline basis and its applications to a partial integro-differential equation with a weakly singular kernel.
\newblock \emph{Computational \& Applied Mathematics}, 40\penalty0 (2), 2021.

\bibitem[Basna et~al.(2022)Basna, Nassar, and Podg{\'o}rski]{BasnaNP}
R.~Basna, H.~Nassar, and K.~Podg{\'o}rski.
\newblock Data driven orthogonal basis selection for functional data analysis.
\newblock \emph{Journal of Multivariate Analysis}, 189:\penalty0 104868, 2022.
\newblock ISSN 0047-259X.

\bibitem[Basna et~al.(2023)Basna, Nassar, and Podg{\'o}rski]{basna2023splinets}
R.~Basna, H.~Nassar, and K.~Podg{\'o}rski.
\newblock Splinets--orthogonal splines and fda for the classification problem.
\newblock \emph{arXiv preprint arXiv:2311.17102}, 2023.

\bibitem[Bracewell(1978)]{bracewell78}
R.~N. Bracewell.
\newblock \emph{{T}he {F}ourier {T}ransform and {I}ts {A}pplications}.
\newblock McGraw-Hill, New York, 1978.

\bibitem[De~Boor(1978)]{deboor78}
C.~De~Boor.
\newblock \emph{A Practical Guide to Splines}.
\newblock Springer-Verlag, New York, 1978.

\bibitem[Dierckx(1993)]{dierckx}
P.~Dierckx.
\newblock Curve and surface fitting with splines.
\newblock In \emph{Monographs on numerical analysis}. Oxford: Clarendon Press, 1993.

\bibitem[Egozcue et~al.(2006)Egozcue, D\'iaz–Barrero, and Pawlowsky–Glahn]{egozcue06}
J.~Egozcue, J.~D\'iaz–Barrero, and V.~Pawlowsky–Glahn.
\newblock Hilbert space of probability density functions based on {A}itchison geometry.
\newblock \emph{Acta Mathematica Sinica}, 22:\penalty0 1175--1182, 2006.

\bibitem[Erichson et~al.(2020)Erichson, Zheng, Manohar, Brunton, Kutz, and Aravkin]{erichson18}
N.~B. Erichson, P.~Zheng, K.~Manohar, S.~L. Brunton, J.~N. Kutz, and A.~Y. Aravkin.
\newblock Sparse principal component analysis via variable projection.
\newblock \emph{SIAM Journal on Applied Mathematics}, 80\penalty0 (2), 2020.

\bibitem[Flickner et~al.(1996)Flickner, Hafner, Rodriguez, and Sanz]{periodic}
M.~Flickner, J.~Hafner, E.~Rodriguez, and J.~Sanz.
\newblock Periodic quasi-orthogonal spline bases and applications to least-squares curve fitting of digital images.
\newblock \emph{IEEE Transactions on Image Processing}, 5\penalty0 (1):\penalty0 71--88, 1996.

\bibitem[Hron et~al.(2016)Hron, Menafoglio, Templ, Hruzova, and Filzmoser]{SFPCA}
K.~Hron, A.~Menafoglio, M.~Templ, K.~Hruzova, and P.~Filzmoser.
\newblock Simplicial principal component analysis for density functions in bayes spaces.
\newblock \emph{Computational Statistics \& Data Analysis}, 94:\penalty0 330--350, FEB 2016.

\bibitem[Johnson and Wichern(2007)]{johnson07}
R.~Johnson and D.~Wichern.
\newblock \emph{Applied {M}ultivariate {S}tatistical {A}nalysis}.
\newblock Prentice Hall, Upper Saddle River, 6th edition, 2007.

\bibitem[Kamada et~al.(1988)Kamada, Toraichi, and Mori]{kamada}
M.~Kamada, K.~Toraichi, and R.~Mori.
\newblock Periodic spline orthonormal bases.
\newblock \emph{Journal of Approximation Theory}, 55\penalty0 (1):\penalty0 27--34, 1988.

\bibitem[Kokoszka and Reimherr(2017)]{kokoszka17}
P.~Kokoszka and M.~Reimherr.
\newblock \emph{Introduction to Functional Data Analysis}.
\newblock CRC Press, Boca Raton, 2017.

\bibitem[Liu et~al.(2022)Liu, Nassar, and Podg{\'o}rski]{splinets}
X.~Liu, H.~Nassar, and K.~Podg{\'o}rski.
\newblock Dyadic diagonalization of positive definite band matrices and efficient b-spline orthogonalization.
\newblock \emph{Journal of Computational and Applied Mathematics}, page 114444, 2022.

\bibitem[Liu et~al.(2023)Liu, Nassar, and Podg{\'o}rski]{SplinetsP}
X.~Liu, H.~Nassar, and K.~Podg{\'o}rski.
\newblock \emph{Splinets: Functional Data Analysis using Splines and Orthogonal Spline Bases}, 2023.
\newblock URL \url{https://CRAN.R-project.org/package=Splinets}.
\newblock R package version 1.5.0.

\bibitem[Machalov{\'a} et~al.(2016)Machalov{\'a}, Hron, and Monti]{preprocessing}
J.~Machalov{\'a}, K.~Hron, and G.~Monti.
\newblock Preprocessing of centred logratio transformed density functions using smoothing splines.
\newblock \emph{Journal of Applied Statistics}, 43\penalty0 (8):\penalty0 1419--1435, 2016.

\bibitem[Machalov{\'a} et~al.(2021)Machalov{\'a}, Talsk{\'a}, Hron, and G{\'a}ba]{compositional}
J.~Machalov{\'a}, R.~Talsk{\'a}, K.~Hron, and A.~G{\'a}ba.
\newblock Compositional splines for representation of density functions.
\newblock \emph{Computational Statistics}, 36\penalty0 (2):\penalty0 1031--1064, 2021.

\bibitem[Mason et~al.(1993)Mason, Rodriguez, and Seatzu]{mason}
J.~Mason, G.~Rodriguez, and S.~Seatzu.
\newblock Orthogonal splines based on {$B$}-splines -- with applications to least squares, smoothing and regularisation problems.
\newblock \emph{Numerical Algorithms}, 5\penalty0 (1):\penalty0 25--40, 1993.

\bibitem[Nassar and Podg{\'o}rski(2021)]{EDOB}
H.~Nassar and K.~Podg{\'o}rski.
\newblock Empirically driven orthonormal bases for functional data analysis.
\newblock In \emph{Numerical Mathematics and Advanced Applications ENUMATH 2019, Lecture Notes in Computational Science and Engineering 139}, pages 1--12. Springer, 2021.

\bibitem[Nie and Cao(2020)]{NIE2020}
Y.~Nie and J.~Cao.
\newblock Sparse functional principal component analysis in a new regression framework.
\newblock \emph{Computational Statistics \& Data Analysis}, 152:\penalty0 107016, 2020.
\newblock ISSN 0167-9473.

\bibitem[Pavlů et~al.(Submitted)Pavlů, Machalová, Tolosana-Delgado, Hron, Bachmann, and van~den Boogaart]{DPCA}
I.~Pavlů, J.~Machalová, R.~Tolosana-Delgado, K.~Hron, K.~Bachmann, and K.~G. van~den Boogaart.
\newblock Principal {C}omponent {A}nalysis for {D}istributions {O}bserved by {S}amples in {B}ayes {S}paces.
\newblock Submitted.

\bibitem[Podg{\'o}rski(2021)]{podgorski}
K.~Podg{\'o}rski.
\newblock Splinets--splines through the taylor expansion, their support sets and orthogonal bases.
\newblock \emph{arXiv preprint arXiv:2102.00733}, 2021.

\bibitem[{R Core Team}(2021)]{rcore21}
{R Core Team}.
\newblock \emph{R: A Language and Environment for Statistical Computing}.
\newblock R Foundation for Statistical Computing, Vienna, Austria, 2021.
\newblock URL \url{https://www.R-project.org/}.

\bibitem[Ramsay and Silverman(2005)]{ramsay05}
J.~Ramsay and B.~W. Silverman.
\newblock \emph{Functional {D}ata {A}nalysis}.
\newblock Springer, New York, 2005.

\bibitem[Redd(2012)]{Redd}
A.~Redd.
\newblock A comment on the orthogonalization of b-spline basis functions and their derivatives.
\newblock \emph{Statistics and computing}, 22\penalty0 (1):\penalty0 251--257, 2012.

\bibitem[Talsk{\'a} et~al.(2018)Talsk{\'a}, Menafoglio, Machalov{\'a}, Hron, and Fi{\v{s}}erov{\'a}]{talska18}
R.~Talsk{\'a}, A.~Menafoglio, J.~Machalov{\'a}, K.~Hron, and E.~Fi{\v{s}}erov{\'a}.
\newblock Compositional regression with functional response.
\newblock \emph{Computational Statistics \& Data Analysis}, 123:\penalty0 66--85, 2018.

\bibitem[Talsk{\'a} et~al.(2020)Talsk{\'a}, Menafoglio, Hron, Egozcue, and Palarea-Albaladejo]{talska20}
R.~Talsk{\'a}, A.~Menafoglio, K.~Hron, J.~Egozcue, and J.~Palarea-Albaladejo.
\newblock Weighting the domain of probability densities in functional data analysis.
\newblock \emph{Stat}, 9\penalty0 (1):\penalty0 e283, 2020.

\bibitem[Talsk{\'a} et~al.(2021)Talsk{\'a}, Hron, and Matys~Grygar]{talska21}
R.~Talsk{\'a}, K.~Hron, and T.~Matys~Grygar.
\newblock Compositional scalar-on-function regression with application to sediment particle size distributions.
\newblock \emph{Mathematical {G}eosciences}, 53:\penalty0 1667--1695, 2021.

\bibitem[van~den Boogaart et~al.(2010)van~den Boogaart, Egozcue, and Pawlowsky-Glahn]{boogaart10}
K.~G. van~den Boogaart, J.~Egozcue, and V.~Pawlowsky-Glahn.
\newblock Bayes linear spaces.
\newblock \emph{Statistics and {O}perations {R}esearch {T}ransactions}, 34\penalty0 (2):\penalty0 201--222, 2010.

\bibitem[van~den Boogaart et~al.(2014)van~den Boogaart, Egozcue, and Pawlowsky-Glahn]{boogaart14}
K.~G. van~den Boogaart, J.~Egozcue, and V.~Pawlowsky-Glahn.
\newblock Hilbert {B}ayes spaces.
\newblock \emph{Australian \& {N}ew {Z}ealand {J}ournal of {S}tatistics}, 54\penalty0 (2):\penalty0 171--194, 2014.

\bibitem[Šimíček et~al.(2021)Šimíček, Bábek, Hron, Pavlů, and Kapusta]{simicek21}
D.~Šimíček, O.~Bábek, K.~Hron, I.~Pavlů, and J.~Kapusta.
\newblock Separating provenance and palaeoclimatic signals from geochemistry of loess-paleosol sequences using advance statistical tools: Central {E}uropean loess belt.
\newblock \emph{Sedimentary {G}eology}, 419, 2021.

\end{thebibliography}

\end{document}